\newtheorem{Lemma}{Lemma}
\newtheorem{theorem}{Theorem}
\newtheorem{example}{Example}
\newtheorem{problem}{Problem}
\newtheorem{definition}{Definition}
\newcommand{\eager}{\texttt{Eager}}
\newcommand{\afro}{\texttt{pBlocking}} 
\newcommand{\bloss}{\texttt{BLOSS}}
\newcommand{\myparagraph}[1]{\smallskip \noindent \textbf{#1. }}
\newcommand{\review}[1]{{#1}}
\newcommand{\dfnew}[1]{{{#1}}}
\newcommand{\sg}[1]{#1}
\newcommand{\df}[1]{#1}
\newcommand{\bgraph}[0]{{$P$}}
\begin{document}

\title{Efficient and Effective ER with Progressive Blocking}
\author[1]{Sainyam Galhotra} 
\author[2]{Donatella Firmani}
\author[3]{Barna Saha}
\author[4]{Divesh Srivastava}

\affil[1]{UMass Amherst, \texttt{sainyam@cs.umass.edu}}
\affil[2]{Roma Tre University, \texttt{donatella.firmani@uniroma3.it}}
\affil[3]{UC Berkeley, \texttt{barnas@berkeley.edu}}
\affil[4]{AT\&T Labs -- Research, \texttt{divesh@research.att.com}}

\date{}                     


\maketitle

\begin{abstract}
Blocking is a mechanism to improve the efficiency of Entity Resolution (ER) 
which aims to \review{quickly prune out all}  non-matching record pairs. However, depending on the distributions of entity cluster sizes, existing techniques can be either (a) too aggressive, such that they help scale but can adversely affect the ER effectiveness, or (b) too permissive, potentially harming ER efficiency. In this paper, we propose a new methodology of \emph{progressive blocking} (\afro{}) to enable both efficient and effective ER, which works seamlessly across different entity cluster size distributions. 

\afro{} is based on the insight that the effectiveness-efficiency trade-off is revealed only when the output of ER starts to be available. Hence, \afro{} leverages partial ER output in a feedback loop to refine the blocking result in a data-driven fashion. 
Specifically, we bootstrap \afro{} with traditional blocking methods and  progressively improve the building and scoring of blocks until we get the desired trade-off, leveraging a limited amount of ER results as a guidance at every round. \review{We formally prove that \afro{}  converges}  efficiently ($O(n \log^2 n)$ time complexity, where $n$ is the total number of records). Our experiments show that incorporating partial ER output in a feedback loop can improve the efficiency and effectiveness of blocking by 5x and 60\% respectively, \review{improving the overall}  F-score of the entire ER process up to 60\%. 
\end{abstract}

\section{Introduction}
\label{sec:intro}
\sloppy
 Entity Resolution (ER) is the problem of identifying which records in a data \review{set} refer to the same real-world entity~\cite{elmagarmid2007duplicate}.
ER technologies are key for solving complex tasks (e.g., building a knowledge graph) but comparing all the record pairs to decide which pairs match is often infeasible. For this reason, the first step of ER selects sub-quadratic number of record pairs to compare in the subsequent steps. To this end, a commonly used approach is \emph{blocking}~\cite{papadakis2016comparative}. Blocking groups similar records into \emph{blocks} and then selects pairs from the ``cleanest'' blocks -- i.e., those with fewer non-matching pairs -- for further comparisons.
The literature is rich with methods for building and processing blocks~\cite{papadakis2016comparative}, but depending on the data \review{set} at hand, different techniques can either leave too many matching pairs outside, leading to incomplete ER results and low effectiveness, or include too many non-matching pairs, leading to low efficiency. 
\begingroup
\setlength{\tabcolsep}{2pt} 
\begin{table}
\scriptsize
\centering
\caption{Sample records (we omit schema information) referring to 4 distinct entities. $r_i^{e}$ represents the i-th record referring to entity $e$. Records in the first \review{two} rows refer to a Chevrolet Corvette C6 ($c6$) and a Z6 ($z6$). Records in the \review{last two rows} to a Chevrolet Malibu ($ma$) and a Citr{\"o}en C6 ($ci$) (same model name as Corvette C6 but different car). 
\label{tab:example}}
\begin{tabular}{|l|l|}
\hline
$r_1^{c6}$: chevy corvette c6       & $r_2^{c6}$: chevy corvette c6 navigation \\ \hline
$r_3^{c6}$: chevrolet corvette c6 & $r_1^{z6}$: corvette z6 navigation \\ \hline
$r_1^{ma}$: chevy malibu navigation & $r_2^{ma}$: chevrolet chevy malibu       \\ \hline
$r_3^{ma}$: chevrolet malibu     & $r_1^{ci}$: citroen c6 navigation  \\ \hline
\end{tabular}
\end{table}
\endgroup

\begin{figure*}[t]
    \centering
    {\subfloat[\label{fig:mypipel}]{\includegraphics[height=2.3cm]{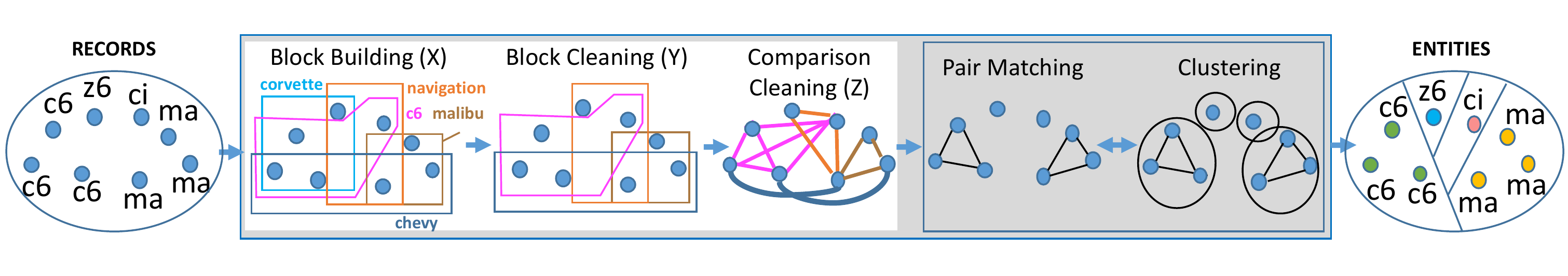}}}
    {\subfloat[\label{fig:distr}]{\includegraphics[height=2.3cm]{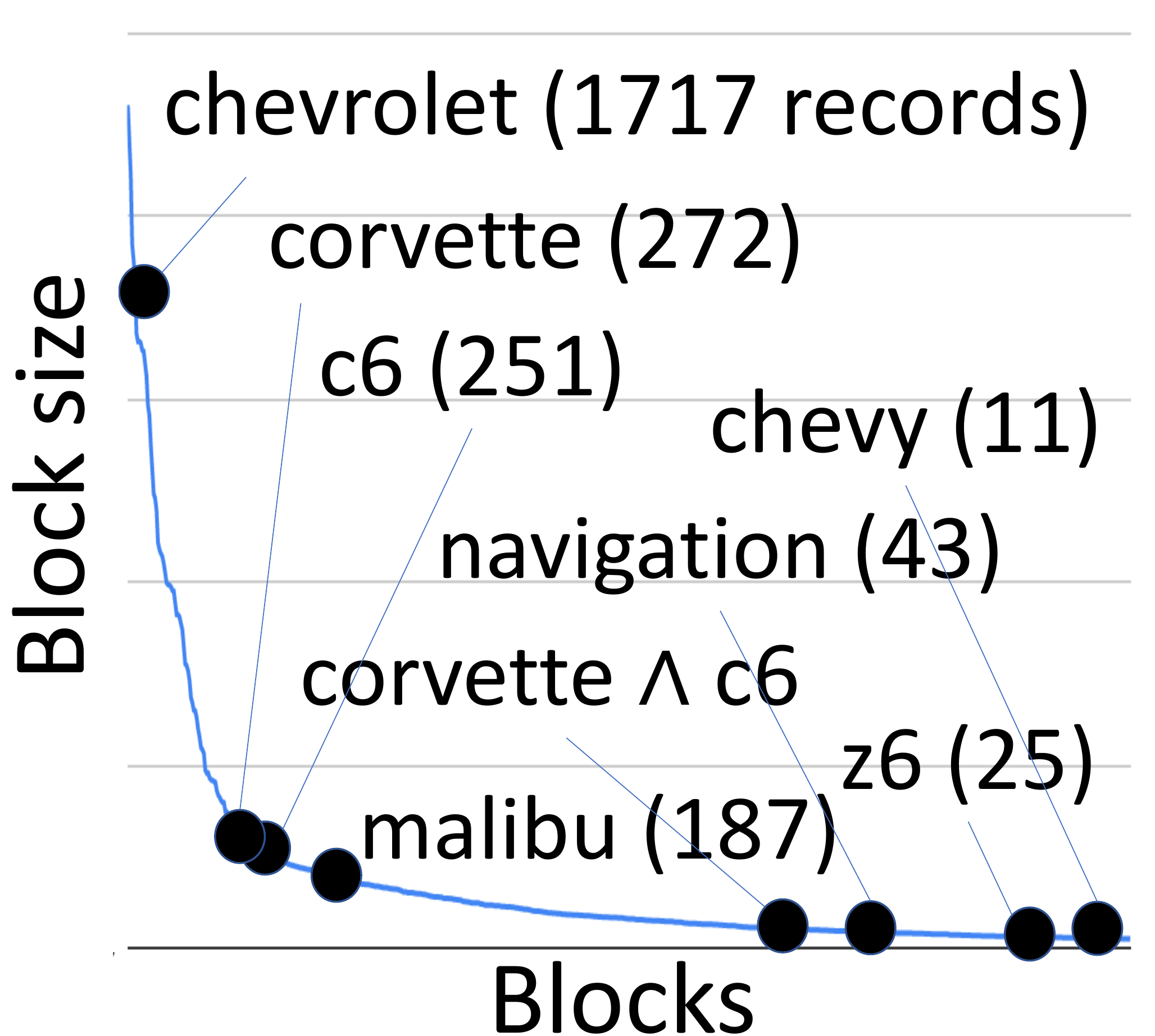}}}
    \caption{\dfnew{(a) Illustration of a standard blocking pipeline. Block building, block cleaning and comparison cleaning sub-tasks are highlighted in white. The downstream ER algorithm is shown in gray. Description of each record is reported in Table~\ref{tab:example}. 
    (b) Block size distribution (standard blocking) for the real \texttt{cars} dataset used in our experiments.}}
\end{figure*}



\myparagraph{\dfnew{\afro{}}} We propose a new \emph{progressive} blocking technique that overcomes the above limitations by short-circuiting the two operations -- blocking and pair comparisons -- that are traditionally solved sequentially. 
Our method starts with an aggressive blocking step, which is efficient but not very effective. Then, it  computes  a limited amount of ER results on \review{a subset of pairs selected by the aggressive blocking, and sends these partial (matching and non-matching) results from the ER phase back to the blocking phase, creating a ``loop'', to improve blocking effectiveness.} In this way, \dfnew{blocking can progressively self-regulate and} adapt to the properties of each dataset, with no configuration effort.  We illustrate our blocking method, that we call \afro{}, in the following example. 

\begin{example} 
Consider the records in Table~\ref{tab:example} from the \texttt{cars} dataset used in our experiments, \dfnew{and a standard schema-agnostic blocking strategy $\mathcal{S}$ such as~\cite{papadakis2015schema}. As shown in Figure~\ref{fig:mypipel}, we consider three blocking sub-tasks~\cite{papadakis2016comparative}. First, during \emph{block building}, $\mathcal{S}$ creates a separate block for each text token (we only show the blocks `corvette', `navigation', `malibu', 'c6' and `chevy').  Then, during \emph{block cleaning}, {$\mathcal{S}$  uses a threshold to prune out all the blocks of large size}. Depending on the threshold value (using the block sizes in the entire {\tt cars} dataset, shown in Figure 1b), we can have any of the following extreme behaviors. (Note that no intermediate setting of the threshold can yield a sparse set of candidates that is at the same time complete.)}
\begin{compactitem} 
    \item \emph{Aggressive} blocking: \dfnew{$\mathcal{S}$ prunes every block except the smallest one (`chevy') and returns} $(r_1^{c6},r_2^{c6})$, $(r_1^{c6},r_1^{ma})$, $(r_2^{c6},r_1^{ma})$ and $(r_1^{ma},r_2^{ma})$, missing $r_3^{c6}$ and $r_3^{ma}$.
    \item \emph{Permissive} blocking: \dfnew{$\mathcal{S}$ prunes only the largest block (`chevrolet') and returns many non-matching pairs}.
\end{compactitem}
\dfnew{Finally, during \emph{comparison cleaning}, $\mathcal{S}$ can use another threshold to further prune out pairs sharing few blocks, e.g. by using \emph{meta-blocking}~\cite{metablocking}. As in block cleaning, different threshold values can yield aggressive or permissive behaviours. Note that matching pairs such as $(r_2^{c6},r_3^{c6})$ share the same number of blocks (`corvette' and `c6') as non-matching pairs such as  $(r_2^{c6},r_1^{z6})$ (`corvette' and `navigation'). (Even worse, `c6' is larger than `navigation'.)} 
\label{ex:motivcars}
\end{example}

\afro{} can solve these problems in a few rounds: the first round does aggressive blocking, the second round does more effective blocking \dfnew{by making targeted updates} accordingly to partial ER results, and so on. \dfnew{Examples of such updates to the blocking result are discussed below.}
\begin{compactenum}
    \item \review{Creation of new blocks that help \emph{inclusion} of} $(r_1^{c6},r_3^{c6}), (r_2^{c6},r_3^{c6})$: 
    \review{\afro{}} creates a \emph{new} block `corvette $\land$ c6' \dfnew{with records present in both blocks `corvette' and `c6'.} \review{This block is much smaller than its two constituents and has only Corvette C6 cars.}
    \item \dfnew{Adaptive cleaning to help \emph{inclusion} of} $(r_1^{ma},r_3^{ma}), (r_2^{ma},r_3^{ma})$: 
    \dfnew{\afro{} can discourage pruning of block `malibu' that contains Chevrolet Malibu cars, even if it is a large block;}
    \item \dfnew{Adaptive cleaning to help \emph{exclusion} of non-matching pairs}: 
    \dfnew{\afro{} can encourage pruning of block `navigation' that contains no matching pairs, even if it is a small block.} 
\end{compactenum}
After a few rounds of updates like the above, \review{\afro{} returns all the matching pairs with very few non-matching pairs}. Note that after the last round, the ER output can be computed on the resulting pairs as in the traditional setting. Updates of type (1) are performed via a new \emph{block intersection} algorithm, while (2) and (3) are performed by a new \emph{block scoring} method. By construction, when the blocking scores converge, the entire blocking result also converges. 



\myparagraph{Our contributions} The main contribution of this paper is a new blocking methodology with both high efficiency and effectiveness in a variety of application scenarios. Since \afro{} can in principle start off using any blocking strategy, it represents not only a new approach but also a way to ``boost'' traditional ones. \afro{} works seamlessly across different entity cluster size distributions such as:
\begin{compactitem}
    \item \emph{small entity clusters}, where, using block intersection, \afro{} can recover entities such as Corvette C6 consisting of few records sharing large and dirty blocks.
    \item \emph{large entity clusters}, where, using block scoring, \afro{} can recover entities such as Chevrolet Malibu consisting of many records sharing large and clean blocks.
\end{compactitem}

\noindent We prove theoretically and show empirically that, with a few rounds and a limited amount of \review{partial ER} results, our progressive blocking method can provide a significant boost in blocking effectiveness without penalizing efficiency. Specifically, we (i) demonstrate fast convergence and low space and time complexity ($O(n \log^2 n)$, where $n$ is the number of records) of \afro{}; (ii) report experiments achieving up to 60\% increase in recall when compared to state-of-the-art blocking~\cite{dal2018bloss}, and up to 5x boost in efficiency. Finally, we observe that \afro{} can yield up to 70\% increase on the F-score of the final ER result, thus confirming the substantial benefits of our approach.

\myparagraph{Outline} The rest of this paper is organized as follows. Sections~\ref{sec:prel} and~\ref{sec:arch} provide preliminary discussions and a high-level description of the \afro{} approach. Sections~\ref{sec:refinement} and~\ref{sec:feedback} explain  our block intersection and block scoring methods, respectively. 
Section~\ref{sec:perf} provides theoretical  analysis of \afro{}'s effectiveness  and Section~\ref{sec:exp} provides extensive experimental results and key takeaways. Section~\ref{sec:related} discusses the related work and we conclude in Section~\ref{sec:concl}. 


\section{Blocking Preliminaries}
\label{sec:prel}

\begin{table}
\centering
\scriptsize
\caption{Notation Table}
\begin{tabular}{||c | c||} 
 \hline
 $V$ & Collection of records \\ \hline
 $\mathcal{C}$ & Collection of clusters \\ \hline
 $B$ & Block: A subset of records, $B \subseteq V$\\\hline
 $p_m(u,v)$ & Similarity between $u$ and $v$\\\hline
 \bgraph{}~$=(V,A')$ & Blocking graph, $A' \subset V \times V$ \\\hline
 $\phi$ & Feedback frequency\\\hline
 $p(B)$ & Probability score of a block $B$ \\\hline
 $u(B)$ & Uniformity score of block $B$\\ \hline
 $H(B)$ & Entropy  of block $B$\\ \hline
 $\mathcal{H}$ & Block Hierarchy \\ \hline
 $G_t$ & Random Geometric graph \\
 \hline
 $\gamma$ & Fraction of nodes used for scoring blocks\\ \hline
 $\mu_g$& Expected similarity of a matching edge\\\hline
 $\mu_r$& Expected similarity of a non-matching edge\\\hline
\end{tabular}
\label{table:notation}
\end{table}

\df{ 
{Table~\ref{table:notation} summarizes the main symbols used throughout this paper.}} \dfnew{Let $V$ be the input set of records, with $|V|=n$. Consider an (unknown) graph $\mathcal{C}=(V, E^{+})$, where $(u,v) \in E^{+}$ means that $u$ and $v$ represent the same entity.
$\mathcal{C}$ is transitively closed, that is, each of its connected components $C \subseteq V$ is a clique representing a distinct entity. We call each clique a \textit{cluster} of $V$, and refer to the partition induced by $\mathcal{C}$ as the ER \emph{ground truth}.}

\dfnew{\begin{definition}[Pair Recall] 
Given a set of matching record pairs $A' \subseteq V \times V$, Pair Recall is the fraction of pairs $(u,v) \in E^+$ that can be either (i) matched directly, because  $(u,v)\in A'$, or (ii) indirectly inferred from other pairs $(u,w_0), (w_0,w_1), \dots, (w_c,v) \in A'$ by connectivity.
\end{definition}
}

\dfnew{A formal definition of the blocking task follows.}

\dfnew{\begin{problem}[Blocking Task]
Given a set of records $V$, group records into possibly overlapping blocks $\mathcal{B}\equiv \{B_1, B_2, \dots\}$, $B_i\subseteq V$ and compute a graph \bgraph{}~$=(V,A')$, where $A' \subseteq A$, $A \equiv \{(u,v) : \exists B_i \in \mathcal{B}\textnormal{ s.t. }u \in B_i \land v \in B_i\}$, such that $A'$ is sparse ($|A'| << {n \choose 2}$) and $A'$ has high Pair Recall. We refer to \bgraph{} as the \emph{blocking graph}.
\label{prob:blocking}
\end{problem}} 

\dfnew{The blocking graph \bgraph{} is the final product of blocking and contains all the pairs that can be considered for pair matching. The efficiency and effectiveness of the blocking method is measured as  Pair Recall (PR) of (the set of edges in) \bgraph{} and the number of edges in it for a certain PR, respectively.}
%
Blocking methods consist of three sub-tasks as defined by~\cite{papadakis2016comparative}: block building, block cleaning and comparison cleaning. In the following, we describe each of these steps and the corresponding methods in the literature. 

\emph{Block building} ($\mathcal{BB}$) takes as input $V$ and returns a block collection $\mathcal{B}$, by assigning each record in $V$ to possibly multiple blocks. The popular \emph{standard blocking}~\cite{papadakis2015schema} strategy creates a separate block $B_t$ for each token $t$ in the records and assigns to $B_t$ all the records that contain the token $t$. In order to tolerate spelling errors, \emph{q-grams blocking}~\cite{gravano2001approximate} considers \review{character-level} q-grams instead of entire tokens. Other strategies include \emph{canopy clustering}~\cite{mccallum2000efficient} \review{and} \emph{sorted neighborhood}~\cite{hernandez1995merge}. \review{Canopy clustering iteratively selects a random seed record $r$, and creates a new block $B_r$ (or a canopy) with all the records that have a high similarity with $r$ with respect to a given similarity function (e.g., using a subset of features~\cite{mccallum2000efficient}). We can use different similarity functions to build different sets of canopies.} Sorted neighborhood sorts all the records \review{according to multiple sort orders (e.g., each according to a different attribute~\cite{hernandez1995merge}) and then it slides a window $w$ of tokens over each ordering, every time creating a new block $B_w$. Blocks have the same number of distinct tokens but the number of records in a block can vary significantly. \df{Each {of} these techniques creates $O(n)$ blocks.}}

\emph{Block cleaning} ($\mathcal{BC}$) takes as input \dfnew{the} block collection $\mathcal{B}$ and returns a subset $\mathcal{B}' \subseteq \mathcal{B}$ by \dfnew{pruning} blocks that may contain too many non-matching record pairs. Block cleaning is typically performed by assigning each block a $score: \mathcal{B} \rightarrow {\rm I\!R}$ with a block scoring procedure and then \dfnew{pruning} blocks with low score. Traditional scoring strategies include functions of block sizes such as TF-IDF~\cite{elmagarmid2007duplicate,papadakis2012blocking}.

\emph{Comparison cleaning} ($\mathcal{CC}$) takes as input the set $A$ of all the intra-block record pairs in \dfnew{the} block collection $\mathcal{B}'$ \dfnew{(which is a subset of the intra-block record pairs in $\mathcal{B}$)} and returns a graph \bgraph{}~$=(V,A')$, with $A' \subseteq A$, by \dfnew{pruning} pairs that \review{are likely to} be non-matching. Comparison cleaning is typically performed by assigning each pair a $weight: A \rightarrow {\rm I\!R}$ and then pruning pairs with low weight. Weighting strategies include \emph{meta-blocking}~\cite{metablocking} possibly with active learning~\cite{simonini2016blast,dal2018bloss}. 
In classic meta-blocking, $weight(u,v)$ corresponds to the number of blocks in which $u$ and $v$ co-occur, \review{based on the assumption that that more blocks a record pair shares, the more likely it is to be matching.\footnote{\review{This assumption holds for block building methods such as standard blocking, q-grams blocking and sorted neighborhood with multiple orderings~\cite{hernandez1995merge}, and extends naturally to canopy clustering by using multiple similarity functions.}}} The recent \bloss{} strategy~\cite{dal2018bloss} employs active learning on top of the pairs generated by meta-blocking, and learns a classifier using features extracted from the blocking graph for further pruning. 

\df{We denote with $\mathcal{B}$($X,Y,Z$) a blocking strategy that uses the methods $X$, $Y$, and $Z$, respectively for block building, block cleaning and comparison cleaning. The strategy used in our $\texttt{cars}$ example (Example~\ref{ex:motivcars}) can be thus denoted as $\mathcal{B}$(\emph{standard blocking, TF-IDF, meta-blocking}).}

\myparagraph{After blocking} \dfnew{Typical ER algorithms include \emph{pair matching} and entity~\emph{clustering} operations. Such operations label as ``matching'' the pairs referring to the same entity and ``non-matching'' otherwise, and typically require the use of a classifier~\cite{mudgal2018deep} or a crowd~\cite{wang2013leveraging}. Clustering consists of building a possibly noisy clustering  $\mathcal{C}'$ according to labels, and can be done with a variety of techniques, including robust variants of connected components~\cite{VerroiosErrors} and random graphs~\cite{galhotraSigmod}. This noisy clustering is the final product of ER. 
}

\section{Overview of \MakeLowercase{p}Blocking}
\label{sec:arch}


Analogous to traditional blocking methods, \afro{} takes as input a collection $V$ of records and returns a blocking graph \bgraph{}. 
%
A high-level view of the methods introduced in \afro{}, for each of the main blocking sub-tasks of Section~\ref{sec:prel}, is provided below. 
Such methods, unlike previous ones, can leverage a feedback of partial ER results. 

\emph{Block building} \review{in \afro{}} constructs new blocks \sg{arranged in the form of} a \emph{hierarchy}. First level blocks are initialized with blocks generated by a traditional method (e.g., standard blocking, sorted neighborhood, canopy clustering or q-gram blocking). Subsequent levels contain intersections of the blocks in the previous levels. 
\afro{} can use feedback from the \review{partial} ER output to build intersections such as `corvette $\land$ c6' that can lead to new, cleaner blocks, and avoid bad intersections such as `corvette $\land$ chevrolet' that would not improve the fraction of matching pairs in \bgraph{} (Chevrolet Corvette C6 and Z6 are different entities). We discuss block intersection in Section~\ref{sec:refinement}.

\emph{Block cleaning} \review{in \afro{}} prunes dirty blocks based on feedback-based scores. First round scores \review{are} initialized with a traditional method \review{(e.g. TF-IDF)}. Then, scores are \review{refined} based on feedback by combining two quantities: the fraction $p(B)$ of matching pairs in a block $B$, and the block uniformity $u(B)$, which captures the distribution of entities within the block ($u(B)$ is the inverse of \emph{perplexity}~\cite{manning1999foundations}). \sg{Since the goal of blocking phase is to identify blocks \review{that} have a higher fraction of matching pairs and fewer entity clusters, we combine the above values as $score(B) = p(B) \cdot u(B)$.}  
\afro{} can use feedback from the \review{partial} ER output to estimate 
$p(B)$ and $u(B)$, yielding high scores for clean blocks such as `malibu' (high $p(B)$ and high $u(B)$) and low scores for dirtier blocks such as `navigation' (low $p(B)$ and low $u(B)$), and `c6' (low $u(B)$). We discuss block scoring in Section~\ref{sec:feedback}. 

Finally, \emph{comparison cleaning} \review{in \afro{}} is implemented with a traditional method such as meta-blocking.


\begin{algorithm}[t]
\caption{Our blocking method \afro{}}
\label{algo:afro}
\begin{algorithmic}[1]
{\scriptsize
\REQUIRE Records $V$, methods $X$, $Y$, $Z$ for each blocking step. Default: \emph{X=standard blocking, Y= TF-IDF} and \emph{Z=meta-blocking}.
\ENSURE Blocking graph \bgraph{}
\STATE $\mathcal{C}' \leftarrow \emptyset$ 
\STATE $B\leftarrow$ build the first level of block hierarchy with method $X$
\STATE $scores \leftarrow$ initialize block scores using method $Y$
\STATE \bgraph{} $ \leftarrow$ block cleaning and comparison cleaning with method $Z$
\STATE \bgraph{}$_{new} \leftarrow \emptyset$
\FOR{round=2; round $\leq 1 / \phi$ $\land$ \bgraph{} $\neq$ \bgraph{}$_{new}$; round++}
    \WHILE{ER progress is less than $\phi$} 
    \STATE $\mathcal{C}' \leftarrow$ \df{Execute an incremental step of method $W$ for pair matching and clustering on \bgraph{}}
    \ENDWHILE

    \STATE $score \leftarrow$ update the block scores according to $\mathcal{C}'$ \texttt{  //Feedback}
    \STATE $B\leftarrow$ update the block hierarchy based on $score$
    \STATE \bgraph{} $ \leftarrow $ \bgraph{}$_{new}$
    \STATE \bgraph{}$_{new} \leftarrow$ block cleaning and comparison cleaning with $Z$
\ENDFOR
\RETURN H
}
\end{algorithmic}
\end{algorithm}

\myparagraph{Workflow} \sg{Algorithm~\ref{algo:afro} describes the \afro{} workflow and how the introduced blocking methods can be used. We denote with \afro{}($X,Y,Z$) \dfnew{a progressive blocking strategy} that uses the methods $X$, $Y$ and $Z$, respectively for building the first level of the block hierarchy, initializing the block scores, and performing comparison cleaning \dfnew{as described in Algorithm~\ref{algo:afro}}. In our \texttt{cars} examples, we have \afro{}(\emph{standard blocking, TF-IDF, meta-blocking}). 

We first initialize the set of clusters $\mathcal{C}'$, the block hierarchy and the block scores (\textbf{lines 1--3}). The next step (\textbf{line 4}) consists of computing the first version of the blocking graph \bgraph{} according to the selected method for comparison cleaning (e.g., meta-blocking). The graph \bgraph{} is then progressively updated, round after round (\textbf{lines 6--12}). In order to activate the feedback mechanism, \afro{} needs to interact with an ER algorithm $W$ for pair matching and clustering operations (\textbf{line 7--8}).} \df{ Algorithm \review{$W$} is executed over \bgraph{} until it makes a \emph{progress} of $\phi$ with $\phi \in [0,1]$, that is, until {$\phi \cdot n \log^2{n}$} record pairs have been processed since the previous round.\footnote{For algorithms such as~\cite{vesdapunt2014crowdsourcing}, progress can be defined as a fraction $\phi \cdot n$ of processed \emph{records} since the previous round.} At that point, the algorithm $W$ is interrupted, $\mathcal{C}'$ is updated (\textbf{line 8}) and sent as feedback to all of \afro{}'s components.} 
Based on such feedback, we update the function $score(B)=p(B) \cdot u(B)$ (\textbf{line 9}) and construct new blocks in the form of a hierarchy (\textbf{line 10}). Higher score blocks are used to enumerate the most promising record pairs and generate the updated blocking graph \bgraph{}$_{new}$ (\textbf{lines 11-12}). When either the maximum number of rounds $\frac{1}{\phi}$ has been reached (setting $\phi=1$ is the same as switching off the feedback) or the blocking result converges (\bgraph{} $=$ \bgraph{}$_{new}$), \afro{} terminates by returning \bgraph{}. 


\review{We present a formal analysis of the effectiveness of \afro{} in Section~\ref{sec:perf}.} We refer to Section~\ref{sec:exp} for experiments. Due to its robustness to different choices of the pair matching algorithm $W$, we do not include $W$ in \afro{}'s parameters (differently from $X$, $Y$, $Z$). Natural choices for $W$ include \emph{progressive} ER strategies that can process \bgraph{} in an online fashion and compute $\mathcal{C}'$ incrementally~\cite{verroioswaldo,vesdapunt2014crowdsourcing,mudgal2018deep}. However, traditional algorithms, such as~\cite{elmagarmid2007duplicate} can be used as well by adding \emph{incremental ER} techniques~\cite{gruenheid2014incremental,incrementwang} on top.

\subsection{Computational complexity}
\label{sec:compl}

For efficiency, it is crucial to ensure that the total time and space taken to compute \bgraph{} is close to linear in $n$. Since every round of \afro{} comes with its own time and space overhead, we first describe how to bound the complexity of every round and then discuss how to set the parameter $\phi$ in Algorithm~\ref{algo:afro} (and thus the maximum number of rounds) so as to bound the complexity of the entire workflow.  

\myparagraph{Round Complexity} \afro{} implements the following strategies to decrease overhead of each round. 

\emph{Efficient block cleaning.} We compute the block scores by sampling $\Theta(\log n)$ \sg{records from each of the top $O(n)$} high-score blocks computed in the previous round. 

\emph{Efficient comparison cleaning.} For simplicity, we build \bgraph{} by enumerating at most $\Theta(n\log^2 n)$ intra-block pairs by processing blocks in non-increasing block score. 

Based on the above discussion, we have Lemma~\ref{lem:compl}.

\begin{Lemma}
A single round of \afro{}($X,Y,Z$), such as \afro{}(\emph{standard blocking, TF-IDF, meta-blocking}) has $O(n\log^2 n)$ \sg{space and time complexity}.
\label{lem:compl}
\end{Lemma}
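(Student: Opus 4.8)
The plan is to decompose a single round (lines 7--12 of Algorithm~\ref{algo:afro}) into its four constituent operations---the ER-progress step, the feedback-based block scoring, the hierarchy update, and the comparison-cleaning step---and to bound the time and space of each separately by $O(n\log^2 n)$. Since a round performs a constant number of such operations, the bound for the whole round then follows by summation. Throughout I would rely on two structural facts already established in the excerpt: every block-building method $X$ under consideration (standard blocking, sorted neighborhood, canopy clustering, q-grams) produces only $O(n)$ blocks, so the collection $\mathcal{B}$ we manipulate always has size $O(n)$; and the clustering $\mathcal{C}'$ is a partition of the $n$ input records and hence occupies $O(n)$ space. I would also make explicit the assumption that the chosen methods $X,Y,Z$ individually respect near-linear budgets (in particular $Z=$~\emph{meta-blocking} assigns a pair weight in amortized $O(1)$ time using an inverted block-to-record index).

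First, for the ER-progress step (lines 7--8): by the definition of progress, the matching algorithm $W$ is interrupted as soon as it has processed $\phi\cdot n\log^2 n$ record pairs since the previous round, and since $\phi\le 1$ this is at most $n\log^2 n$ pairs. Assuming $W$ spends amortized $O(1)$ work per processed pair (the case for classifier- or crowd-based matching that maintains an incremental clustering), this step costs $O(n\log^2 n)$ time, and updating $\mathcal{C}'$ uses $O(n)$ space. Second, for the feedback-based block scoring (line~9): as described under \emph{efficient block cleaning}, we score only the top $O(n)$ blocks, drawing from each a sample of $\Theta(\log n)$ records. Computing $p(B)$ and $u(B)$ from such a sample requires inspecting the $\Theta(\log^2 n)$ pairs among the sampled records (to estimate the matching fraction) and tallying the entity labels of the sampled records (to estimate uniformity, i.e.\ the inverse perplexity), that is $O(\log^2 n)$ work per block; summed over $O(n)$ blocks this gives $O(n\log^2 n)$ time and $O(n\log n)$ space for the samples.

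Third, for the hierarchy update (line~10) and comparison cleaning (lines~11--12): I would invoke the block-intersection construction of Section~\ref{sec:refinement} to argue that the update keeps the number of blocks at $O(n)$. Building \bgraph{} via \emph{efficient comparison cleaning} then proceeds by sorting the $O(n)$ blocks by score ($O(n\log n)$) and enumerating intra-block pairs in non-increasing score order, halting after at most $\Theta(n\log^2 n)$ pairs have been emitted. Each emitted pair receives its meta-blocking weight in amortized $O(1)$ time by maintaining, via the inverted index, the running count of blocks shared by its two records. Hence this step is $O(n\log^2 n)$ in both time and space, the latter dominated by the $\Theta(n\log^2 n)$ edges stored in \bgraph{}.

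The main obstacle is this third part. Without the explicit cap of $\Theta(n\log^2 n)$ emitted pairs, a single high-degree record can co-occur with $\Omega(n)$ others across many blocks, so a naive enumeration of all intra-block pairs would be $\Omega(n^2)$---exactly the blow-up blocking is meant to avoid. The whole argument therefore hinges on showing that processing blocks in non-increasing score order and truncating the enumeration at the $\Theta(n\log^2 n)$ budget still retains the high-score (clean) pairs that the scoring step deemed promising, while capping total work. Accordingly, I would phrase this cap as an explicit invariant of the comparison-cleaning subroutine and argue that truncation discards only low-score pairs, so that it bounds the cost without sacrificing the pairs relevant to Pair Recall; verifying that the hierarchy-update cost of Section~\ref{sec:refinement} likewise stays within the $O(n\log^2 n)$ budget is the remaining technical point.
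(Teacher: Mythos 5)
Your proposal is correct and rests on the same quantitative caps as the paper's proof: $O(n)$ blocks scored per round, $\Theta(\log n)$ sampled records (hence $O(\log^2 n)$ pairs) per block, and the hard budget of $\Theta(n\log^2 n)$ enumerated intra-block pairs for \bgraph{}. The organization differs, though. You decompose the round stage-by-stage (ER progress, scoring, hierarchy update, comparison cleaning) and bound each, making explicit several assumptions the paper leaves tacit: amortized $O(1)$ weighting per emitted pair via an inverted index, the near-linear cost of the hierarchy update from Section~\ref{sec:refinement}, and the truncation invariant for comparison cleaning. The paper instead leads with a point you only touch in passing: the feedback itself contains transitively inferred matching and non-matching edges whose number can be $\Omega(n\log^2 n)$ (indeed quadratic in cluster sizes for matches), so the space bound hinges on a compact representation --- matching feedback is stored as an entity id per record ($O(n)$ total, which your observation that $\mathcal{C}'$ is a partition captures for the matching side), while non-matching feedback is stored as edges between \emph{entity ids} rather than between records, bounded by the $O(n\log^2 n)$ pairs compared per round. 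Your version buys a cleaner accounting of where time is spent and surfaces the hidden dependencies on $X$, $Y$, $Z$; the paper's version buys the representation insight that keeps feedback storage within budget across transitivity. Neither gap is fatal: your ER-progress cap of $\phi\, n\log^2 n$ processed pairs implicitly bounds the stored non-match edges, so the two arguments are complementary rather than in conflict, and your flagged ``remaining technical point'' on the hierarchy-update cost is exactly what the paper's Lemma on the $O(n)$ hierarchy size (under the constant-blocks-per-record assumption) supplies.
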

\begin{proof}
We first show that the total feedback is limited to $O(n\log^2 n)$ \sg{space complexity}, even though it considers all transitively inferred matching and non-matching edges, which can be $\Omega(n\log^2 n)$. 
For the matching pairs, we store all the records  with an entity id such that any pair of records that have been resolved share the same id. This requires $O(n)$ space in the worst case and captures all the matching edges that have been identified in the ER output. For the non-matching pairs, we store a non-matching edge between their entity ids. Since the maximum number of pairs returned by \afro{} is limited to $O(n\log^2 n)$, the total number of pairs compared in each round and thus the number of non-matching edges stored is also $O(n\log^2 n)$.  
Then, we analyze the complexity of using feedback for the $\mathcal{BB}$ and $\mathcal{BC}$ tasks. Since the maximum number of blocks considered in any round for the scoring component is $O(n)$ and the scoring mechanism samples $O(\log^2 n)$ pairs from each block, the total number of edges enumerated for block scoring and building is $O(n\log^2 n)$. Since the maximum number of pairs for inclusion in the graph $H$ is also $O(n\log^2 n)$, a single round of \afro{} outputs $H$ in $O(n\log^2 n)$ total work.
\end{proof}
\myparagraph{Workflow Complexity} \review{As discussed in Section~\ref{sec:perf}, $\phi$ can be set to a small constant fraction.}
Thus, along with Lemma~\ref{lem:compl}, this guarantees an $O(n\log^2 n)$ complexity for the entire workflow. Experimentally a smaller $\phi$ value yields higher final recall,  thus as a default we set $\phi=0.01$, yielding a maximum of $100$ rounds. Although such a $\phi$ value gets the best trade-off between effectiveness and efficiency in our experiments, we also observe that slight variations of its setting do not affect the performance much (Section~\ref{sec:exp}), demonstrating the robustness of \afro{}.

\section{Block Building}
\label{sec:refinement}

One of the major challenges of \df{block building ($\mathcal{BB}$)} is that when generating candidate pairs that capture matches it can also generate a number of non-matching pairs. This phenomenon is highly prevalent in datasets with very few matching pairs. \df{To overcome this challenge, our \emph{block building by intersection} algorithm takes a collection of \review{blocks $B_1, \ldots, B_m$} built by a traditional method for $\mathcal{BB}$ and creates \review{smaller clean blocks out of large dirty} ones, thus contributing to the recall of the blocking graph without adding extra non-matching pairs.} 
An \emph{intersection block hierarchy} $\mathcal{H}$ is constructed as follows. Let the first layer be $B_1, \dots, B_m$. Then blocks in layer $L$ consist of the intersection of $L$ distinct blocks in the first layer.


\begin{example}
\df{Consider our \texttt{cars} example in Section~\ref{sec:intro}, and the blocks corresponding to tokens  `corvette' and  `c6', namely $B_\texttt{corvette}$, and $B_\texttt{c6}$. A sample block in the second level of $\mathcal{H}$ is $B_{corvette,c6} = B_{corvette} \cap B_{c6}$. When we build the new block, we only include records containing the two tokens `corvette' and  `c6' (possibly non consecutively), thus obtaining a cleaner block than the original ones.}
\end{example}

\myparagraph{\dfnew{Refined blocks}} We refer to the newly created block as a \emph{refined} block, and to the intersecting blocks as \emph{parent} blocks. Not all the refined blocks are useful. We need one of the following correlation based conditions to hold to decide if a refined block $B_{i,j}$ must be kept in $\mathcal{H}$. 
\begin{compactitem}
\item  $score(B_{i,j})> score(B_i) \cdot score(B_j)$, that is the score of the refined block is higher than the combined score of the parent blocks.
\item \review{The existence of a randomly chosen record $r$ in blocks $B_i$ and $B_j$ is positively correlated, i.e.  $Pr[r\in B_{i,j}] = |B_{i,j}|/n > Pr(r\in B_i) \cdot Pr(r\in B_j) $, which simplifies to} $|B_{i,j}| > \frac{|B_i||B_j|}{n}$. 
\review{For example, the number of common records in blocks corresponding to tokens `c6' and `corvette'  is much higher than the common records in blocks corresponding to `navigation' and `c6'.}
\end{compactitem} 

\begin{algorithm}[t]
\caption{Block Layers Creation}
\label{algo:creation}
\begin{algorithmic}[1]
{\scriptsize
\REQUIRE Set of records $V$, depth $d$
\ENSURE 	Layer set $\{L_1,\ldots, L_d\}$
 \FOR{$i=1;i\leq d;i++$}
    \STATE $L_i\leftarrow \phi$
 \ENDFOR
 \STATE processed $\leftarrow \phi$
\FOR{$v\in V$}
\STATE blockLst$\leftarrow$ getBlocks(v)
    \FOR{$i=2;i<$d$ ;i++$}
        \FOR{\review{$\mathcal{B}= \{B_j: B_j\in $ {blockLst}\}, $|\mathcal{B}|=i$}}
        \STATE \review{$B'=\cap_{B_j\in \mathcal{B}} B_j$}
        \IF{$B'\notin $ processed}
            \STATE $L_{i}$.append($B'$)
            \STATE processed.append($B'$)
        \ENDIF
        \ENDFOR
        \STATE blockLst$\leftarrow L_i$
    \ENDFOR
\ENDFOR
}
\end{algorithmic}
\end{algorithm}

\begin{algorithm}[t]
\caption{ Layer Cleaning}
\label{algo:cleaning}
\begin{algorithmic}[1]
{\scriptsize
\REQUIRE Layer set $\{L_1,\ldots, L_d\}$
\ENSURE Cleaned Layer set $\{L_1,\ldots, L_d\}$
\FOR{$i=2;i<d;i++$}
    \FOR{block $\in L_i$}
        \STATE parentLst $\leftarrow$ getParents(block)
        \IF{$\prod_{p\in parentLst} score(p) < score(block)$ \OR $\prod_{p\in parentLst} \frac{|L_{i-1}[p]|}{n}<\frac{|L_i[block]|}{n}$}
        \STATE continue
        \ELSE
        \STATE $L_i$.remove(block)
        \ENDIF
    \ENDFOR
\ENDFOR
}
\end{algorithmic}
\end{algorithm}

\noindent Suppose the maximum depth of the hierarchy is $d$ which is a constant. The construction of \dfnew{refined} blocks can take $O(n^d)$ time if the number of blocks considered in the first layer is $O(n)$.  For efficiency, we iterate over the records (linear scan) and for each record $r$, we consider all pairs of blocks that contain $r$ as candidates to generate blocks in the different levels of the hierarchy. 
The following lemma bounds the total number of refined blocks across the hierarchy. 
\begin{Lemma}
The number of blocks present in $\mathcal{H}$ is $O(n)$ if each record $r$ is present in a constant number of blocks. 
\end{Lemma}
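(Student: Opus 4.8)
The plan is to bound the number of refined blocks level by level, exploiting the fact that Algorithm~\ref{algo:creation} is \emph{record-driven}: it enumerates intersections only among the blocks that contain the record $r$ currently being scanned, and therefore it never materializes an empty intersection. Let $k$ denote the constant bound on the number of first-layer blocks containing any single record. The first observation I would establish is structural: any non-empty block at level $i$ of $\mathcal{H}$ has the form $\bigcap_{B\in S} B$ for some size-$i$ subset $S$ of the first-layer blocks, and non-emptiness forces the existence of a record $r$ lying in every $B\in S$. Equivalently, $S$ must be a size-$i$ subset of the at most $k$ blocks that contain $r$, which is precisely the candidate set Algorithm~\ref{algo:creation} considers when it processes $r$.

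Next I would count by charging each distinct level-$i$ block to the records it contains. Since each record $r$ is in at most $k$ first-layer blocks, it can generate at most $\binom{k}{i}$ subsets of size $i$, hence at most $\binom{k}{i}$ distinct level-$i$ intersections. Every non-empty level-$i$ block contains at least one record and is therefore charged at least once, so the number of distinct level-$i$ blocks is at most $\sum_{r\in V}\binom{k}{i}\le n\binom{k}{i}$. Summing over the constantly many levels $1\le i\le d$ yields a total of at most $n\sum_{i=1}^{d}\binom{k}{i}\le n\,2^{k}$ blocks, which is $O(n)$ because $k$ and $d$ are constants. (Since layer cleaning in Algorithm~\ref{algo:cleaning} only removes blocks, this bound holds for $\mathcal{H}$ both before and after cleaning.)

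The one step that needs care is the double-counting argument: the incidence sum $\sum_{r}\binom{k}{i}$ counts (record, subset) pairs rather than distinct blocks, so I must argue that each distinct non-empty block is counted \emph{at least} once to conclude that this incidence count dominates the number of distinct blocks; this is exactly where non-emptiness is used. A secondary point is the base case—confirming that the first layer has $O(n)$ blocks—which follows either from the $O(n)$ block guarantee of standard block building stated earlier, or directly from the constant-degree assumption, since there are at most $nk$ record--block incidences and hence at most $nk$ non-empty first-layer blocks. I expect the structural observation about non-empty intersections being the only ones created to be the conceptual crux, as it is what converts the naive $O(n^d)$ enumeration bound into a linear one.
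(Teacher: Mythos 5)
Your proof is correct and follows essentially the same route as the paper's: charge each level-$i$ block to a record it contains, bound the per-record contribution by $\binom{k}{i}$ using the constant-degree assumption, and sum over records and the constantly many levels to get $n\sum_{i=1}^{d}\binom{k}{i}=O(n)$. Your version is somewhat more careful than the paper's—making explicit both that non-emptiness guarantees a witnessing record (so every created block is counted at least once) and that layer cleaning only shrinks the count—but the underlying argument is the same.
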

\begin{proof}
Our algorithm considers each record $u\in V$ and generates intersection blocks by performing conjunction of blocks that contain the record $u$. Suppose the record $u$ is present in $\gamma_u$ blocks in the first layer. Then the maximum number of blocks present in $\mathcal{H}$ that contain $u$ is $\sum_{i=1}^{d} {\gamma_u \choose i} $. Assuming $\gamma_u$ is a constant, the maximum number of blocks in the hierarchy is $n\sum_{i=1}^{d} {\gamma_u \choose i}=O(n)$.
\end{proof}

\myparagraph{Refinement algorithm} We are now ready to describe \afro{}'s intersection method for building the block hierarchy. Our method has two steps:
\begin{compactitem}
\item (Alg. \ref{algo:creation}) The first step creates all possible blocks considering the intersection search space.
\item (Alg. \ref{algo:cleaning}) The cleaning phase removes the blocks that do not satisfy the correlation criterion described above. 
\end{compactitem}
Algorithm~\ref{algo:creation} describes the creation step, which iterates over all the records in the corpus and creates all possible blocks per record. \review{The list of all blocks to which a record belongs is constructed (denoted by blockLst) and the new blocks are added in different layers. The layer of the new block depends on the number of intersecting blocks that constitute the new block}.  Then, the cleaning step in Algorithm~\ref{algo:cleaning} iterates over the different layers and keeps only the blocks that satisfy the score or size requirements. For a block in layer $q$, \texttt{getParents}() identifies the two blocks which are in layer $(q-1)$ whose conjunction generates the block being considered.  If these parents have been removed during the cleaning phase, then their parents are considered and the process is continued recursively until we end up at the ancestors present in the list of blocks. 

\review{
Block Layers Creation (Alg.~\ref{algo:creation}) constructs all the blocks in the form of a hierarchy and Layer Cleaning (Alg.~\ref{algo:cleaning}) deactivates the blocks that do not satisfy the correlation requirements. \dfnew{Since the result of block layers creation does not change in different \afro{} iterations, decoupling the creation component from the cleaning component (which changes dynamically) allows for more efficient computation.}} 

\myparagraph{Time complexity} Assuming the depth of the hierarchy is a constant, Algorithms~\ref{algo:creation} and~\ref{algo:cleaning} operate in time linear in the number of records $n$. 
Block refinement takes $3$ minutes for a data set with $1M$ records in our experiments.

\section{Block Cleaning}
\label{sec:feedback}

Let $A' \subset V \times V$ be the pairs selected by blocking phase at a given point (we recall that $A'$ is the edge set of the blocking graph \bgraph{}~$=(V,A')$) and each considered pair $(u,v)\in A'$ has a similarity value denoted by $p_m(u,v)$. A block $B\subseteq V$ refers to a subset of records. Using this notation, we discuss the different methods for scoring blocks and how the scores converge with feedback for effective ER performance.


\myparagraph{\dfnew{Block scoring}} Block scoring helps to distinguish informative blocks based on their ability to capture records from a single cluster. By selecting pairs within informative blocks, down-stream ER operations can focus on records pairs that have high probability of being a match. 
The most common mechanism used in the literature is TF-IDF and it assigns block scores inversely proportional to the block size prioritizing smaller blocks over larger ones. If the data set has small clusters, such a simple method can work well. \review{However}, if the data set has a skewed cluster size distribution, some large blocks are just uninformative (and are rightfully less preferred by TF-IDF), but others can represent a large cluster and thus should stand out in the scoring. Distinguishing these blocks before pair matching can be difficult, but \afro{} provides a way to leverage the feedback. 



Specifically, the scoring algorithm of \afro{} prioritizes blocks having (a) high fraction of matching pairs \sg{measured as  matching probability within a block} and (b) fewer number of clusters (especially larger clusters)  measured as uniformity (a function of entropy of the cluster distribution within a given block $B$). Lower entropy and hence lower diversity values indicate the representativeness of $B$ towards a particular cluster as opposed to higher entropy values which refer to the presence of many fragmented clusters. 

\sg{More formally, the matching probability score identifies the probability that a randomly chosen pair $(u,v) \mid u,v\in B$ refers to the same entity and is defined as follows. }

\begin{definition}[Matching Probability score $p(B)$]
\sg{The value $p(B)$ is defined as the fraction of matching pairs within a block $B$.}
\end{definition}

The block uniformity, $u(B)$ captures perplexity of cluster distribution within $B$ measured in terms of its entropy.
\begin{definition}[Cluster Entropy $H(B)$]
\sg{The cluster entropy of a block, $H(B)$ refers to the entropy of the cluster distribution when restricted to the records present in  block $B$. Mathematically, $H(B)= -\sum_{C\in \mathcal{C}} p_C\log p_C $,
where $p_C=|C\cap B|/|B|$ refers to the probability that a randomly chosen node from $B$ belongs to cluster $C$.}
\end{definition}
Using $H(B)$, block uniformity score is defined as follows.
\begin{definition}[Block Uniformity $u(B)$]
\sg{The block uniformity $u(B)=e^{-H(B)}$ is the inverse of perplexity \cite{manning1999foundations} of the cluster distribution within the block where perplexity refers to the exponential of cluster distribution entropy. }
\end{definition}

\dfnew{
\begin{example}
Suppose that we know that a block $B$ contains records of two clusters $C_1$ and $C_2$ and thus we can compute the uniformity of $B$ exactly. If the two clusters are perfectly balanced in $B$, i.e., $|C_1 \cap B|= 0.5 \cdot |B|$ and $|C_2 \cap B|= 0.5 \cdot |B|$, the entropy is $H(B)=-0.5\log 0.5 - 0.5 \log 0.5 \approx 0.69$ and thus $u(B) = e^{-H(B)} = 0.5$. If there is some skew, e.g. $|C_1 \cap B|= 0.7 \cdot |B|$ and $|C_2 \cap B|= 0.3 \cdot |B|$, then the entropy is lower $H(B)=-0.7\log 0.7 - 0.3 \log 0.3 \approx 0.61$ and the uniformity is higher $u(B) \approx 0.54$. In the extreme case where $C_1 \cap B = B$ and $C_2 \cap B = \emptyset$, $H(B)=0$ and $u(B)=1$. 
\end{example}}

\noindent \dfnew{Note that when resolving two duplicate-free datasets where all clusters are of size 2 (also known as Record Linkage) the entropy increases with block size, thus block uniformity yields comparable results to traditional TF-IDF.}

{

Since the goal of block scoring is to identify blocks that have high matching probability and high uniformity, we multiply the two values to get a final estimate of the block score. 
\begin{definition}[Block Score, $score(B)$]
The score of a block $B$, $score(B)$, is defined as the product of matching probability score and uniformity score of $B$. That is, $score(B)=p(B) u(B)$.
\end{definition}
Next, we describe the algorithm to estimate these components of block score. The exact value of matching probability and block uniformity requires complete ER results. However, \afro{}  estimates these scores initially with the similarity estimates of every pair of records and refines these scores with  additional feedback from \review{partial ER results}. 

\myparagraph{Matching probability score}
The matching probability score is estimated as the average matching similarity of pairs of records within the block, i.e.: 
$$p(B)=\frac{\sum_{u,v\in B} p_m(u,v)}{{|B| \choose 2}}$$ where $p_m(u,v)$ is estimated as follows:}
\begin{compactitem}
    \item for pairs declared as matches, we set $p_m(u,v)=1$;
    \item for pairs declared as non-matches, we set $p_m(u,v)=0$;
    \item for unlabelled pairs, we use the $p_m$ values computed by common similarity metrics (e.g. via jaccard similarity or the similarity-to-probability mapping as in~\cite{papenbrock2015progressive}).
\end{compactitem}

\myparagraph{Block uniformity estimation}
Estimating uniformity score \review{requires the cluster size distribution in $B$, which is harder to infer from the prior similarity values}. We next describe a mechanism to estimate entropy $H(B)$ needed to compute the uniformity score.  
We consider each record $u \in B$, and consider the cluster $C_u$ that contains $u$. We are interested in computing $\frac{|C_u \cap B|}{|B|}$ in order to compute entropy $H(B)$. Instead, we compute the expected size of $|C_u\cap B|$ as $E_u=E[|C_u \cap B|]= \sum_{v\in B} p_m(u,v)$
based on $p_m$ values \review{of edges incident on $u$}. We compute the expected cluster size for every record $u \in B$ and sort them in non-increasing order. Let $L$ be the sorted list. Let the first record in the sorted list $L$, that is, the node with highest expected cluster size in $B$ be $u$. On expectation $u$ has $E_u$ records in $B$ that belong to $C_u$. All these records must have similar expected cluster sizes as well. 
We put $u$ and the next $\lfloor E_u \rfloor$ records from $L$ to a set $S_U$, assuming that they belong to the same cluster $C_u$. We recurse on $L \setminus S_U$ until a partition $\{S_U, S_V, \dots\}$ of the block is generated. The size of each part\review{ition} can be thought of as a rough estimate of the true cluster distribution in $B$ and is used to calculate the entropy. 

\dfnew{
\begin{example}
Consider a block $B$, with $|B|=10$. Let $[u_1, u_2 \dots u_{10}]$ be the corresponding list $L$ of records sorted in non-increasing $E_{u_i}$ values.
If $E_{u_1}=\sum_{i \in 2 \dots 10} p_m(u_1,u_i) = 6.6$ we set $S_{U1}=\{u_1 \dots u_{1+\lfloor E_{u_1} \rfloor}\}=\{u_1 \dots u_{7}\}$ and then consider the next node in $L$ which is $u_8$. If $E_{u_8}=\sum_{i \in 9, 10} p_m(u_8,u_i) = 2$ we set $S_{U8}=\{u_8 \dots u_{8+\lfloor E_{u_8} \rfloor}\}=\{u_8 \dots u_{10}\}$ and then finish. As $|S_{U1}|=0.7 \cdot |B|$ and $|S_{U8}|=0.3 \cdot |B|$ we estimate $u(B)=e^{-0.7\log 0.7 - 0.3 \log 0.3} \approx 0.54$.
\end{example}
}

The value returned by this mechanism is generally an \review{under-estimate} of the true entropy $H(B)$ but in practice it can approach $H(B)$ quickly with increasing feedback data and turns out to be very efficient. Section~\ref{sec:theory} discusses this convergence rate in different application scenarios.



\myparagraph{Efficient block cleaning} \dfnew{Traditional scoring strategies such as TF-IDF are based on block size computation and thus operate in linear time. Computing our $score(B)$ values requires instead to process intra-block pairs and thus yields potentially quadratic computation. Hence, we sample $\Theta( \log n)$ records from each block for its score computation. This strategy operates in $\Theta( \log^2 n)$ time and takes less than $1$ minute for a data set with $1M$ records in our experiments. Our sampling strategy gives  an approximation within a factor of $(1+\epsilon)$ of the matching probability scores estimated using all the records within each block (Lemma~\ref{lem:sampling}).}

\section{ Analysis of \MakeLowercase{p}Blocking}
\label{sec:perf}
In this section we present a theoretical analysis of the effectiveness of \afro{}. We first analyze the pair recall of blocking in the absence of feedback by considering a natural generative model for block creation. Next we analyze the effect of feedback on block scoring and the final recall. 
\subsection{Pair Recall without Feedback}

\dfnew{We start by giving the following basic lemma below.
\begin{Lemma}
The blocking graph \bgraph{}~$=(V,A')$ contains a spanning tree for each clique $C$ of $\mathcal{C} = (V, E^+)$  iff the Pair Recall is 1.\label{lem:rec}
\end{Lemma}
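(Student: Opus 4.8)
The plan is to recast both sides of the equivalence as connectivity statements about the blocking graph \bgraph{}~$=(V,A')$, and then prove the two implications by directly unfolding the Pair Recall definition. The key observation is that a pair $(u,v)\in E^+$ is counted as recovered exactly when $u$ and $v$ lie in the same connected component of $(V,A')$: case (i) of the definition is the special case where the connecting path is a single edge, and case (ii) is precisely the existence of a path $(u,w_0),(w_0,w_1),\dots,(w_c,v)$ in $A'$. Hence Pair Recall $=1$ is equivalent to the statement that every $(u,v)\in E^+$ is connected in $(V,A')$. On the other side, since a tree is connected, a subtree of \bgraph{} that spans all records of a clique $C$ exists if and only if $C$ is contained in a single connected component of $(V,A')$; so ``contains a spanning tree for each clique $C$'' is likewise a connectivity statement.

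For the ($\Leftarrow$) direction I would assume \bgraph{} contains a spanning tree for each clique and deduce Pair Recall $=1$. Since $\mathcal{C}=(V,E^+)$ is transitively closed, every pair $(u,v)\in E^+$ has both endpoints in the same clique $C$. By assumption $C$ lies in a single connected component of \bgraph{}, so $u$ and $v$ are joined by a path in $A'$ and the pair is recovered; as this holds for all of $E^+$, the recall is $1$.

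For the ($\Rightarrow$) direction I would assume Pair Recall $=1$ and fix an arbitrary clique $C$. Because $C$ is a clique, every unordered pair $u,v\in C$ satisfies $(u,v)\in E^+$, and is therefore recovered, so $u$ and $v$ are connected in $(V,A')$. Pairwise connectivity of all records of $C$ places them in one connected component, and taking a spanning tree of that component yields a subtree of \bgraph{} spanning $C$, as required.

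I expect the only delicate point to be the precise meaning of ``a spanning tree for clique $C$'': whether the connecting paths must stay inside $C$ (equivalently, whether the subgraph induced on $C$ must be connected) or may route through records of other clusters. I would adopt the latter reading, since the Pair Recall definition explicitly allows an inferred path $(u,w_0),\dots,(w_c,v)$ whose intermediate records $w_i$ need not belong to $C$; under the induced-subgraph reading the forward implication would fail, because two records of $C$ can be linked only through a non-matching record. Making this interpretation explicit, and justifying the ``$C$ in one component $\iff$ subtree spanning $C$'' equivalence, is what ties the two connectivity characterizations together and completes the proof.
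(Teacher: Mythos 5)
You spotted the genuine ambiguity in the Pair Recall definition, but you resolved it the opposite way from the paper, and that choice makes your proof establish a different statement than the lemma asserts. In the paper's semantics, inference ``by connectivity'' is transitive inference of \emph{matches}: a pair $(u,v)$ can be deduced from a chain $(u,w_0),\dots,(w_c,v)$ only if every link in the chain is itself a matching pair, i.e., the chain lies in $A'\cap E^+$. A path through a record $w\notin C$ consists of non-matching pairs, and from ``$u$ does not match $w$'' and ``$w$ does not match $v$'' nothing about $(u,v)$ follows; the paper's own converse argument makes this explicit, saying the cross edges between the split parts $C_1,C_2$ of $C$ under $A'\cap E^+$ ``cannot be inferred as none of these edges are processed by the mentioned ER operations.'' Since $\mathcal{C}$ is transitively closed, any chain of matching pairs starting inside $C$ stays inside $C$, so the spanning tree in the lemma is a spanning tree of $C$ over the edges $A'\cap E^+$ --- precisely the induced-subgraph reading you rejected. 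Your claimed failure of the forward implication under that reading dissolves once recall is computed correctly: if two records of $C$ are joined only through a non-matching record, that pair is simply \emph{not} recovered, Pair Recall is below $1$, and no spanning tree is owed.

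Under your reading, both sides of the equivalence collapse to ``each cluster lies in one connected component of $(V,A')$,'' so your proof, while internally consistent, is essentially vacuous and does not serve the paper's purpose. The distinction has teeth downstream: Lemma~\ref{lem:disconn} and the theorem following it derive recall $<1$ from the disconnection of clusters in $G_t$ \emph{restricted to matching edges} ($G_t'=G_t(V,E\cap E^+)$); under your interpretation the blocking graph could be globally connected through hub records while every cluster is shattered on matching edges, recall would still equal $1$, and that entire analysis would be invalidated. The paper's proof is the short argument in the matching-edge semantics: a spanning tree of $C$ over $A'\cap E^+$ yields every pair of $C$ directly or by transitivity, and conversely, absence of such a tree splits $C$ into components whose cross pairs can be neither matched nor inferred.
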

\begin{proof}
If $A'$ contains a spanning tree for each clique $C$, then any pair $(u,v)\in A'\cap E^+$ contributes directly to the recall.
\sg{All pairs of records $(u,v)$ that refer to the same entity, $(u,v) \in E^+$ and are not present in $A'$, $(u,v)\notin A'$ can be inferred from the edges in the spanning tree using transitivity, ensuring Pair Recall = 1. For the converse, let us assume that $\exists~C\in \mathcal{C}$ such that $A'$ does not contain any spanning tree over the matching edges. This implies that $C$ is split into multiple components (say $C_1$, $C_2$) when restricted to $ A'\cap E^+$ edges. In this case, the collection of matching edges joining these components, $\{(x,y), \forall x\in C_1, y\in C_2\}$ cannot be inferred as none of these edges are processed by the mentioned ER operations, yielding pair recall of \bgraph{} less than 1.} 
\end{proof}}


Our probabilistic model for block creation is motivated by the standard blocking~\cite{papadakis2015schema}, sorted neighborhood~\cite{hernandez1995merge} and canopy clustering~\cite{mccallum2000efficient}  algorithms which aim to generate blocks that capture high similarity candidate pairs. 
This model of block generation is closely related to  Random Geometric Graphs \cite{penrose2003random} which were proposed by Gilbert in 1961 and have been used widely to analyze spatial graphs.
\begin{definition}[Random Geometric Graphs]
Let $S^t$ refer to the surface of a t-dimensional unit sphere, $S^t\equiv \{x\in \mathbb{R}^{t+1}\mid ||x||_2=1 \}$.
A random geometric graph $G_t(V,E)$ of $n$ vertices $V$, has parameters $t\in \mathbb{Z}^+ $ and a real number $r\in [0,2]$. It assigns each vertex $i\in V$ to a point chosen independently and uniformly at random  within $S^t$ and any pair of vertices $i,j\in V$ are connected if the distance between their respective points is less than $r$.
\end{definition} 
Now, we define the probabilistic block generation model.
\begin{definition}[Probabilistic Block Generation] The block generation model places the records $u\in V$ independently and uniformly at random within $S^t$. Every record $u$ constructs a ball of volume $(\alpha\log n/n)$ with $u$ as the center, where $\alpha$ is a given parameter and all points within the ball are referred to as block $B_u$.
\end{definition}

The set of points present within a ball $B_u$ can be seen as high similarity points that would have been chosen as blocking candidates  in the absence of feedback. Our probabilistic block generation model constructs $n$ blocks, one for each node and every pair of records that co-occur in a block $B_u, u\in V$, has an edge in the blocking graph \bgraph{}$_g(V,E)$ (subscript $g$ to emphasize generative model). Next we analyze  pair recall of \bgraph{}$_g(V,E)$. 

\myparagraph{Notation} Let $d(u,v)$ refer to the distance between records $u$ and $v$ and $r_\epsilon$ refer to the radius of an $\epsilon$-volume ball\footnote{ $\epsilon = O(r_\epsilon^t)$.} in $t$ dimensions.
Under these assumptions we first show that the expected number of edges in the blocking graph \bgraph{}$_g$ is at least $\frac{\alpha (n-1)\log n}{2}$ and then that \bgraph{}$_g(V,E)$ has recall $< <1$.

\begin{Lemma}
The blocking graph \bgraph{}$_g(V,E)$ contains at least $\alpha\frac{(n-1)\log n}{2}$ candidate pairs on expectation.
\end{Lemma}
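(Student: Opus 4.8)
The plan is to obtain the bound by an elementary linearity-of-expectation argument, counting only the ``direct'' edges that arise when one record lies inside the other's ball; since these form a subset of all the edges (candidate pairs) of \bgraph{}$_g$, bounding their expected number from below suffices to establish the ``at least'' claim.

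First I would observe that for any pair $(u,v)$, if $v$ falls inside the ball $B_u$ centered at $u$, then $u$ and $v$ co-occur in the block $B_u$: indeed $u$ is the center of $B_u$ and so trivially $u\in B_u$, while $v\in B_u$ by assumption. Hence $(u,v)$ is an edge of \bgraph{}$_g$. Consequently the event ``$(u,v)$ is an edge'' contains the event ``$v\in B_u$'' as a sub-event, which gives $\Pr[(u,v)\in E]\ge \Pr[v\in B_u]$.

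Next I would compute $\Pr[v\in B_u]$. Because every record is placed independently and uniformly at random on $S^t$, and $B_u$ is a ball of volume $\alpha\log n/n$ (expressed as a fraction of the total surface measure), the probability that the uniform point $v$ lands in $B_u$ equals that volume, $\alpha\log n/n$. Summing over all $\binom{n}{2}$ pairs and invoking linearity of expectation then yields
$$\mathbb{E}[|E|]\ \ge\ \binom{n}{2}\cdot\frac{\alpha\log n}{n}\ =\ \frac{n(n-1)}{2}\cdot\frac{\alpha\log n}{n}\ =\ \alpha\,\frac{(n-1)\log n}{2},$$
which is exactly the claimed bound.

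The main point to be careful about is the direction of the inequality behind the word ``at least''. Edges can also be created when two records co-occur in some third record's block $B_w$ without either lying in the other's ball, so the true expected number of candidate pairs can be strictly larger than the quantity above; these additional edges only strengthen the lower bound and can be safely discarded. I would also verify the one modeling assumption the calculation rests on, namely that the surface measure on $S^t$ is normalized to total mass one, so that the ``volume'' $\alpha\log n/n$ of a ball coincides with the probability that a uniformly placed record falls inside it.
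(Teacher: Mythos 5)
Your proposal is correct and matches the paper's argument in substance: the paper computes the expected number of ball-membership neighbors per node, $\alpha(n-1)\log n/n$, sums over the $n$ nodes and divides by two for double counting, which is exactly your per-pair linearity-of-expectation computation reorganized, and it likewise notes that edges arising from co-occurrence in a third record's ball are ignored and only strengthen the lower bound. Your explicit remarks on the inequality direction and the normalization of the surface measure are sound and consistent with the paper's implicit assumptions.
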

\begin{proof}
Each record $u\in V$, constructs a spherical ball of volume $\alpha \log n/n$, with $u$ as the center and all points within the ball are added as neighbors of $u$ in the blocking graph. Hence, the number of expected neighbors of $u$ within the ball is $\alpha (n-1)\log n/n $. 
There are a total of $n$ such blocks (one ball per record) and  each of the candidate pairs $(u,v)$ is counted twice (once for the block  $B_u$ and once for the block  $B_v$). Hence there are a total of $\frac{\alpha(n-1)\log n}{2}$ such candidate pairs.  Notice that this analysis ignores the candidate pairs $(u,v)$ which are more than $r_{\alpha\log n/n}$ from each other but are connected in the blocking graph. This would happen if they are present together in another block centered at $w\in V\setminus\{u,v\}$, that is $\exists w \mid d(u,w)\leq r_{\alpha\log n/n}$ and $d(v,w)\leq r_{\alpha\log n/n}$. This shows that the total number of candidate pairs in the blocking graph is atleast $\frac{\alpha(n-1)\log n}{2}$.
\end{proof}

 Additionally, \bgraph{}$_g(V,E)$ has the following property:
 \begin{Lemma}
 A blocking graph \bgraph{}$_g$ is a subgraph of a random geometric graph $G_t$ with $r=2r_{\alpha\log n/n}$
 \end{Lemma}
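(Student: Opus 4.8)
The plan is to prove the containment by coupling the two random graphs on a single random point configuration and then invoking the triangle inequality. Both \bgraph{}$_g$ and $G_t$ are defined on the same vertex set $V$, with each record placed independently and uniformly at random on the sphere $S^t$. First I would fix one such placement and use it to instantiate both graphs simultaneously, so that the asserted subgraph relation is to be read edge-wise over the identical geometry; under this coupling it suffices to show that every edge of \bgraph{}$_g$ is also an edge of $G_t$.

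Next I would characterize the edges of \bgraph{}$_g$. By the probabilistic block generation model, a pair $(v,w)$ is an edge of \bgraph{}$_g$ precisely when $v$ and $w$ co-occur in some block $B_u$, i.e. when there exists a center $u \in V$ with $d(u,v)\le r_{\alpha\log n/n}$ and $d(u,w)\le r_{\alpha\log n/n}$, since $B_u$ is the ball of volume $\alpha\log n/n$ (hence radius $r_{\alpha\log n/n}$) centered at $u$. The center $u$ may coincide with $v$ or $w$, but this only tightens the estimate. Fixing such an edge and its witnessing center $u$, the triangle inequality for the metric $d$ on $S^t$ yields
$$ d(v,w) \;\le\; d(v,u) + d(u,w) \;\le\; r_{\alpha\log n/n} + r_{\alpha\log n/n} \;=\; 2\,r_{\alpha\log n/n}. $$
Because $G_t$ with parameter $r = 2\,r_{\alpha\log n/n}$ connects exactly those pairs whose distance is at most $r$, the pair $(v,w)$ is an edge of $G_t$ as well. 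Since this holds for every edge of \bgraph{}$_g$ on the common point set, \bgraph{}$_g$ is a subgraph of $G_t$.

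The only delicate point I would flag explicitly is the strict-versus-nonstrict threshold: block membership uses $d(u,\cdot)\le r_{\alpha\log n/n}$, whereas $G_t$ joins pairs at distance \emph{strictly} less than $r$. The triangle bound above can be attained with equality ($d(v,w)=2\,r_{\alpha\log n/n}$) only when $u$ lies on a geodesic between $v$ and $w$ and both endpoints sit exactly at the boundary radius, which is a measure-zero event under the continuous uniform placement. Hence the containment holds with probability $1$, and it holds deterministically if $G_t$ is taken with a non-strict threshold. I expect no substantive obstacle beyond this measure-zero technicality, as the entire argument reduces to a direct triangle-inequality coupling.
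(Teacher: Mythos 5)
Your proof is correct and follows essentially the same route as the paper's: both arguments observe that an edge of \bgraph{}$_g$ requires a witnessing block center $w$ with $d(u,w)\le r_{\alpha\log n/n}$ and $d(v,w)\le r_{\alpha\log n/n}$, and conclude $d(u,v)\le 2r_{\alpha\log n/n}$ by the triangle inequality, hence containment in $G_t$ with $r=2r_{\alpha\log n/n}$. Your explicit coupling on a common point configuration and your handling of the strict-versus-nonstrict distance threshold (a measure-zero boundary event) are minor refinements of details the paper leaves implicit, not a different approach.
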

 \begin{proof}
 Following the construction of blocking graph, if the distance between any pair of vertices $u,v\in V$ is less than or equal to $r_{c\log n/n}$, then $(u,v)\in E$. Similarly, any pair of nodes $u,v\in V$ such that $d(u,v)> 2r_{c\log n/n}$, then $(u,v)\notin E$. However, if $r_{c\log n/n}<d(u,v)\leq 2r_{c\log n/n}$, the pair $(u,v)\in H_g$ only if $\exists w\in V$ such that $d(u,w)\leq r_{c\log n/n}$  and $d(v,w)\leq r_{c\log n/n}$. This shows that the blocking graph $H_g$ is a subgraph of a random geometric graph where a pair of vertices (u,v) is connected only if the distance  $d(u,v)\leq 2r_{c\log n/n}$ is connected.
 \end{proof}
 
  This means that if $G_t$ has suboptimal recall then  \bgraph{}$_g$ also has poor recall and hence, we analyze the recall of $G_t$ with $r=2r_{\alpha\log n/n}$. Lemma \ref{lem:rec} shows that the blocking graph will achieve  recall $=1$ only if it contains a spanning tree of each cluster.  Hence, we analyze the formation of spanning trees in $G_t'=G_t(V,E\cap E^+)$ that refers to $G_t$ restricted to matching edges. We show the following result,
\sg{
\begin{Lemma}\label{lem:disconn}
The graph $G_t$ restricted to matching edges  in the ground truth, $E^+$ splits a cluster $C$, where $|C|=o(n/\alpha)$ into multiple components.
\end{Lemma}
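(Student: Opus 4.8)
The plan is to recognize the claim as a sub-criticality statement for the connectivity of a random geometric graph. First I would note that, in the probabilistic block generation model, \emph{every} record is placed independently and uniformly at random on $S^t$, and this applies in particular to the $m := |C|$ records of the fixed cluster $C$. Since all intra-cluster pairs lie in $E^+$, the subgraph that $G_t$ restricted to $E^+$ induces on $C$ is exactly the random geometric graph on these $m$ uniform points with connection radius $2r_{\alpha\log n/n}$: two records $u,v\in C$ are joined iff $d(u,v)\le 2r_{\alpha\log n/n}$. Hence ``$G_t$ splits $C$'' is equivalent to ``this induced random geometric graph is disconnected,'' which I would prove by exhibiting an isolated vertex.

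Second, I would compute the relevant probabilities. Using $\epsilon=\Theta(r_\epsilon^t)$ and treating $t$ as constant, a ball of radius $2r_{\alpha\log n/n}$ has volume $\Theta\big(2^t\cdot \alpha\log n/n\big)=\Theta(\alpha\log n/n)$; because $S^t$ is homogeneous there are no boundary effects and this volume is the same around every center. Normalizing the total volume of $S^t$ to $1$, the probability that two records of $C$ are adjacent is $p=\Theta(\alpha\log n/n)$, so a fixed record has expected intra-cluster degree $\theta=(m-1)p=\Theta\big(m\alpha\log n/n\big)$. The hypothesis $m=o(n/\alpha)$ immediately gives $\theta=o(\log n)$, i.e.\ the induced graph sits below the $\log m$ connectivity threshold of a random geometric graph.

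Third, I would run a first-moment computation on isolated vertices. The expected number of records of $C$ with no intra-cluster neighbor is $\mu=m(1-p)^{m-1}\approx m\,e^{-\theta}$, so $\log\mu=\log m-\theta+o(1)$. When $m=n^{\Omega(1)}$ we have $\log m=\Omega(\log n)$ while $\theta=o(\log n)$, so $\log\mu\to\infty$ and thus $\mu\to\infty$; when $m$ is small, $\theta\to0$ forces even $\binom{m}{2}p\to0$, so with high probability $C$ carries no edge at all and is trivially split. Either way there is an abundance of candidate isolated vertices.

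Finally, since the isolation events for different records are dependent through the shared point configuration, I would upgrade $\mu\to\infty$ to a with-high-probability statement via a second-moment (equivalently, a Chen--Stein/Poisson) argument, bounding the covariance of pairs of isolation indicators and showing $\mathrm{Var}=o(\mu^2)$; this yields at least one isolated record with probability $\to 1$, and hence $C$ is split into at least two components. I expect the concentration step to be the main obstacle, together with making the first-moment bound uniform across the entire range $m=o(n/\alpha)$ (the small-$m$ regime, where $\log m$ is not $\Theta(\log n)$, has to be treated by the ``no-edge'' argument rather than by isolated vertices).
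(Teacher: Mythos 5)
Your proposal is correct, and it follows the same skeleton as the paper's proof: both observe that the subgraph induced on $C$ by $G_t$ restricted to $E^+$ is itself a random geometric graph on $|C|$ uniform points of $S^t$, compute the expected intra-cluster degree $\Theta\bigl(|C|\,\alpha\log n/n\bigr)=o(\log n)$, and conclude disconnection from sub-criticality. The difference is in how the sub-critical regime is handled. The paper invokes Penrose's connectivity threshold as a black box and asserts in one line that the expected degree is $o(\log|C|)$; as written, that step is loose, since $o(\log n)$ does not in general imply $o(\log|C|)$ when $|C|$ is small --- it is rescued only by the specific form of the degree, which the paper does not spell out. You instead reprove the threshold statement from scratch: a first-moment count of isolated vertices (using the boundary-free homogeneity of $S^t$, a point the paper never makes explicit) when $\log|C|=\Omega(\log n)$, a separate ``no edges at all since $\binom{m}{2}p\to 0$'' argument for small clusters, and a second-moment/Chen--Stein step to upgrade the expectation to a with-high-probability statement. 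Your case split is exhaustive for constant $\alpha$ (if $m^2 p \not\to 0$ then $m = n^{\Omega(1)}$), so it cleanly covers the regime the paper glosses over. What the paper's route buys is brevity via a citation; what yours buys is a self-contained, genuinely rigorous argument that also makes the implicit ``with high probability'' qualifier of the lemma explicit. The one cosmetic gap on your side is that the concentration step is only sketched, but you correctly identify it as the remaining work, and it is standard for isolated vertices in random geometric graphs.
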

\begin{proof}
Using the connectivity result from \cite{penrose2003random}, a random geometric graph $G_t$ of $n$ nodes is disconnected if the expected degree of the nodes is $< \log n$. Additionally, it splits the graph $G_t$ into many smaller clusters. Therefore, a cluster $C\in V$ is disconnected in $G_t'=G_t(V,E\cap E^+)$ if the degree of each vertex is $< \log |C|$.

The expected degree of a record $u\in C$, restricted to $G_t'$ is $O(|C|(\frac{\alpha \log n}{n} ))  =o (\log n) $ if $|C| =o(n/\alpha)$. Hence, the expected degree of each node within a cluster $C$ is $o(\log |C|)$, leading to formation of disconnected components within $C$.
\end{proof}

\begin{theorem}
A blocking graph \bgraph{}$_g(V,E)$, generated according to the probabilistic block model has recall $<1$ unless all clusters have size $\Theta(n)$ assuming $\alpha$ is a constant.
\end{theorem}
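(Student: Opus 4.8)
The plan is to obtain the theorem as a direct consequence of the structural characterization of recall in Lemma~\ref{lem:rec} together with the disconnection result of Lemma~\ref{lem:disconn}, bridged by the subgraph relationship between the blocking graph $H_g$ and the random geometric graph $G_t$. By Lemma~\ref{lem:rec}, Pair Recall equals $1$ if and only if the blocking graph restricted to the matching edges contains a spanning tree of every cluster $C$. Hence it suffices to exhibit a single cluster that is disconnected in $H_g \cap E^+$ whenever some cluster is ``too small''.

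First I would transfer disconnection from $G_t'$ down to the blocking graph. The preceding lemma shows that $H_g$ is a subgraph of the random geometric graph $G_t$ with radius $2 r_{\alpha \log n / n}$; intersecting both edge sets with $E^+$ preserves this containment, so $H_g \cap E^+$ is a spanning subgraph of $G_t' = G_t(V, E \cap E^+)$ on the same vertex set $V$. Since deleting edges can only increase the number of connected components, any cluster that is split in $G_t'$ is necessarily also split in $H_g \cap E^+$.

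Next I would invoke Lemma~\ref{lem:disconn}, which states that $G_t'$ fragments any cluster $C$ with $|C| = o(n/\alpha)$ into multiple components, because the expected within-cluster degree $O(|C| \cdot \alpha \log n / n)$ falls below the connectivity threshold $\log |C|$ supplied by the random-geometric-graph theory of~\cite{penrose2003random}. Chaining this with the previous step, if even a single cluster has size $o(n/\alpha)$ then $H_g \cap E^+$ lacks a spanning tree of that cluster, and by Lemma~\ref{lem:rec} the recall is strictly below $1$. Taking the contrapositive, and using that $\alpha$ is constant (whence $o(n/\alpha) = o(n)$) and that no cluster can exceed $n$ records (so that size $\Omega(n)$ forces $\Theta(n)$), recall $=1$ can hold only when every cluster has size $\Theta(n)$.

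The main obstacle I anticipate is the rigorous handling of the threshold phenomenon underlying Lemma~\ref{lem:disconn}: it rests on the asymptotic connectivity threshold for random geometric graphs and is inherently an expectation / high-probability statement, so the cleanest write-up should phrase the conclusion as holding with high probability (or in expectation), and should carefully verify that the containment $H_g \cap E^+ \subseteq G_t'$ is a subgraph relation on the \emph{full} vertex set $V$ rather than a mere edge-set inclusion, so that disconnectivity propagates faithfully. Beyond this, the argument is a short logical chaining of lemmas already in hand.
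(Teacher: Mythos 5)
Your proposal is correct and follows essentially the same route as the paper's own proof: chaining Lemma~\ref{lem:rec} (spanning-tree characterization of recall) with Lemma~\ref{lem:disconn} (fragmentation of clusters of size $o(n/\alpha)$ in $G_t'$) via the subgraph containment of \bgraph{}$_g$ in $G_t$. If anything, you are slightly more careful than the paper, which leaves the transfer step $H_g \cap E^+ \subseteq G_t'$ implicit and does not flag the high-probability nature of the underlying connectivity threshold.
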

\begin{proof}
Lemma~\ref{lem:disconn} shows that the cluster $C$ of size $< n/\alpha$ is split into various disconnected components when restricted to matching edges. Hence, the blocking graph \bgraph{}$_g$ does not form a spanning tree of $C$ and will have  recall less than $1$ \review{(Lemma~\ref{lem:rec})}. Since the cluster $C$ is broken into many small clusters, the drop in recall is also significant. 
\end{proof}

\myparagraph{Remark} The analysis extends when considering less noisy data such as when only a constant fraction of records are placed randomly on the unit sphere, and the remaining records are grouped together according to the cluster identity they belong to. 
Our analysis  exposes the lack of robustness of performing blocking without feedback.

}

\subsection{Pair Recall with Feedback} 
\label{sec:theory}
\sloppy
In this section we analyze the pair recall of blocking when employed with \afro{}. For this analysis we consider the noisy edge similarity model $p_m(u,v)$ that builds on the edge noise model studied in prior work on ER \cite{firmani2016online}. 
\begin{definition}[Noisy edge model] Noisy edge model defines the similarity of a pair of records with parameters $\theta\in (0,1)$, $\beta = \Theta(\log n)$ and $\beta' = \Theta(\log n)$. A matching edge $(u,v)\in E^+$ has a similarity distributed uniformly at random within $[\theta,1]$  with probability $1-\frac{\beta}{n}$ and remaining edges are distributed uniformly within $[0,\theta)$. 
A non-matching edge has similar distribution on similarity values with $\beta'$ instead of $\beta$.
\end{definition}

When $\beta << \beta'$, the matching probability score of a block with higher fraction of matching edges is much higher than the one with fewer matching edges and  \afro{}  algorithm will consider blocks in the correct ordering even in the absence of feedback. However, it is most challenging when non-matching edges are generated with a distribution similar to matching edges, that is $\beta$ and $\beta'$ are close. We define a random variable $X(u,v)$ to refer to the edge similarity distributed according to the noisy edge model. Following this notion, let $\mu_g$ and $\mu_r$ denote the expected similarity  of a matching and non-matching edge respectively. 
$$\mu_g = \left(1-\beta/n\right)\frac{1+\theta}{2} + \frac{\beta}{n}\frac{\theta}{2}$$ and $\mu_r$ has the same value with $\beta'$ instead of $\beta$.

We show that the feedback based block score initialized with TF-IDF weights is able to achieve perfect recall with  a feedback of $\Theta(n\log^2 n)$  pairs assuming that the ER phase { makes no mistakes on the pairs that it processes, helping to ensure  the correctness of partially inferred entities. } Additionally, 
the feedback from the ER phase is distributed randomly across edges within a block. 
We also discuss the extension  when feedback is  biased towards pairs from large entity clusters and high similarity pairs. In those scenarios, \afro{}'s scoring mechanism converges quicker leveraging the larger feedback due to transitivity.

\myparagraph{Effect of Sampling}
First, we show that sampling $\Theta(\log n)$ records from a block gives  approximation within a factor of $(1+\epsilon)$ of the matching probability score computed using all the records.

\review{\begin{Lemma}
For a block $B$ with $|B| > c\log n$, the matching probability score of $B$ estimated by sampling $\Theta(\log n/\epsilon^2) $ records randomly is within $[(1-\epsilon),(1+\epsilon)]$ factor of $p(B)$ with a probability of $1-o(1)$, where $p(B)$ is the score using all $|B|$ records.
\label{lem:sampling}
\end{Lemma}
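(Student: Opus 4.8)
The plan is to view the sampled estimate as a U-statistic of order two and apply a Chernoff-type concentration bound. Let $s = \Theta(\log n / \epsilon^2)$ be the number of records drawn uniformly at random from $B$, and let $\hat p(B) = \binom{s}{2}^{-1}\sum_{i<j} p_m(u_i,u_j)$ be the average of $p_m$ over all pairs among the sampled records $u_1,\dots,u_s$. Because the sample is drawn uniformly, every unordered pair of $B$ is equally likely to appear, so $\mathbb{E}[\hat p(B)] = p(B)$ exactly; the estimator is unbiased. It then remains to show that $\hat p(B)$ concentrates around its mean within a multiplicative $(1\pm\epsilon)$ factor.

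The main obstacle is that the $\binom{s}{2}$ summands are not independent -- any two pairs sharing a record are correlated -- so I cannot apply a concentration inequality to them directly. I would handle this with the classical U-statistic trick: group the $s$ sampled records into $\lfloor s/2\rfloor$ disjoint pairs, so that $p_m(u_1,u_2), p_m(u_3,u_4), \dots$ are independent $[0,1]$-valued random variables, each with mean $p(B)$. Averaging over one fixed grouping yields an unbiased estimator built from $\lfloor s/2\rfloor$ independent samples, to which a multiplicative Chernoff bound applies: the probability that this average deviates from $p(B)$ by more than an $\epsilon$ factor is at most $2\exp\bigl(-\Omega(\lfloor s/2\rfloor\, p(B)\,\epsilon^2)\bigr)$. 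Since $\hat p(B)$ is exactly the average of these single-grouping estimators over all possible pairings (Hoeffding's representation of a U-statistic as an average of averages of independent blocks), the same tail bound carries over to $\hat p(B)$ by convexity.

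Plugging in $s = \Theta(\log n/\epsilon^2)$ makes the exponent $\Omega(\log n)$ whenever the matching probability $p(B)$ is bounded below by a constant -- the regime of the high-score blocks that the scoring step actually targets -- so the failure probability is $n^{-\Omega(1)} = o(1)$, yielding the claimed $[(1-\epsilon),(1+\epsilon)]$ multiplicative guarantee. The hypothesis $|B| > c\log n$ ensures the block is large enough to draw $s$ records, so that sampling without replacement is well defined and its finite-population correction only sharpens the concentration relative to sampling with replacement. The subtle point worth flagging is the multiplicative, rather than merely additive, form of the bound: it is the factor $p(B)$ in the Chernoff exponent that forces a dependence on $p(B)$, and the stated sample complexity $\Theta(\log n/\epsilon^2)$ is tight precisely when $p(B) = \Theta(1)$; for vanishing $p(B)$ one would instead need $\Theta\bigl(\log n/(p(B)\epsilon^2)\bigr)$ samples to retain the multiplicative form.
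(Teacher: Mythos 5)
Your argument is correct on its own terms, but it takes a genuinely different route from the paper's. You treat the $p_m$ values as fixed and let the sampling of records be the only source of randomness: the pair-average over the sample is then an exactly unbiased order-two U-statistic for $p(B)$, and you handle the dependence among overlapping pairs rigorously via Hoeffding's decomposition into disjoint-pair blocks before applying a multiplicative Chernoff bound. The paper instead works entirely inside its noisy edge model: there the similarities $X(u,v)$ are independent draws with means $\mu_g$ or $\mu_r$, and the proof shows that \emph{both} the sampled score and the full-block score concentrate, via Hoeffding's inequality, around the common model mean $\mu_S=(1-\alpha)\mu_g+\alpha\mu_r$, then converts the two-sided $(1\pm\epsilon')$ bounds into the multiplicative guarantee through the ratio $\frac{1+\epsilon'}{1-\epsilon'}=1+\epsilon$ with $\epsilon'=\epsilon/(2+\epsilon)$; so the paper's estimator is compared to $p(B)$ only through their shared expectation, not by unbiasedness. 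This difference is exactly where the caveat you flag lives: your Chernoff exponent carries a factor of $p(B)$, so the stated sample size suffices only when $p(B)=\Omega(1)$, whereas the paper discharges that condition through the model, in which $\mu_g,\mu_r>1/2$, hence $\mu_S$ is bounded below by a constant for \emph{every} block regardless of its composition $\alpha$ --- under the paper's model your regime assumption holds automatically and your proof recovers the lemma in full. In exchange, your route buys two things the paper's does not: it needs no generative assumption on the similarities, and it correctly accounts for the non-independence of sampled pairs --- the paper applies Hoeffding as though the $\binom{|S|}{2}$ sampled similarities were independent, which the edge-noise model justifies for the noise itself but not for the random matching/non-matching composition of the sample, a point your disjoint-pairing trick handles cleanly (at the mild cost of using only $\Theta(\log n/\epsilon^2)$ independent pairs rather than all $\Theta(\log^2 n)$ sampled pairs, which weakens the exponent by a $\log n$ factor but is still ample for the $1-o(1)$ claim).
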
}
\begin{proof}
\review{Consider a block $B$ with more than $c\log n$ records. Let $X(u,v)$ denote the edge similarity of a pair $(u,v)$ according to the noisy edge model. The matching probability score of $B$ on considering the complete block is $\frac{1}{{|B|\choose 2}}\sum_{u,v\in B} X(u,v)$. The expected score of the block ($\mu_B$) is}

\begin{eqnarray*}
\frac{1}{{|B|\choose 2}}E\left[\sum_{u,v\in B} X(u,v)\right] &&=\frac{1}{{|B|\choose 2}} \sum_{\substack {u,v\in B,\\(u,v)\in E^+}} E[X(u,v)]\\
&& + \frac{1}{{|B|\choose 2}} \sum_{\substack{ u,v\in B,\\(u,v)\in E^-}} E[X(u,v)]\\
&=&  (1-\alpha)\mu_g + \alpha \mu_r
\end{eqnarray*}
where $\alpha$ is the fraction of non-matching pairs in the block $B$.

\review{For a sample of $S=c\log n/\epsilon'^2$ records, the expected probability score ($\mu_S$) is $ (1-\alpha)\mu_g + \alpha \mu_r$, where $\epsilon'=\epsilon/(2+\epsilon)$}
\begin{eqnarray*}
\frac{1}{{c\log n \choose 2}}E[\sum_{u,v\in S} X(u,v)] &=& \frac{1}{{c\log n\choose 2}}\sum_{\substack{u,v\in S, \\(u,v)\in E^+}} E[X(u,v)]\\
&& + \frac{1}{{c\log n\choose 2}}\sum_{\substack{u,v\in S,\\ (u,v)\in E^-}} E[X(u,v)]\\
&=&  (1-\alpha)\mu_g + \alpha \mu_r
\end{eqnarray*}

\review{Using Hoeffding's inequality~\cite{hoeffding},

\begin{eqnarray*}
Pr\left[\frac{1}{{c\log n \choose 2}}\sum_{u,v\in S} X(u,v) \leq  (1-\epsilon')\mu_S \right]\\
\leq e^{-2\epsilon'^2\mu_S^2{c\log n \choose 2}}\\
\leq e^{-2\log n}=\frac{1}{n^2}\\
\end{eqnarray*}
Using the same argument, 
we can show that $Pr\left[{(1-\epsilon')}\mu_S  \leq \frac{1}{{c\log n \choose 2}}\sum\limits_{u,v\in S} X(u,v) \leq  (1+\epsilon')\mu_S \right] \geq 1- \frac{2}{n^2}$
This shows that the calculated probability score on the samples $S$ is within a factor of $(1-\epsilon')$  and $(1+\epsilon')$ of the expected score with a probability of $1-o(1)$. The probability score of $B$ on considering all records, is also within a factor of $(1-\epsilon')$  and $(1+\epsilon')$ of the expected value $\mu_S$. Therefore, the estimated score on sampling guarantees approximation within a factor of $\frac{1+\epsilon'}{1-\epsilon'} = 1+2\epsilon'/(1-\epsilon') = 1+\epsilon$ with a high probability. }
\end{proof}

\noindent \dfnew{The above lemma can extend to block uniformity because $p_m$ values are used analogously for expected cluster sizes. In Lemma~\ref{lem:atleast} we show how to set the constant within the $\Theta$ notation based on level of noise in the $p_m$ values.}

To prove the convergence of \afro{}, we first estimate the lower and upper bound of  matching probability scores of a block $B$ in the presence of feedback   and show that a feedback of $\Theta(\log^2 n)$ is enough to rank blocks with larger fraction of matching pairs higher than the blocks with fewer matching pairs. Our analysis first considers the blocks containing more than $\gamma \log n$  records  (where $\gamma$ is a large constant say $12$) and we analyze the smaller blocks separately.

\myparagraph{Convergence for large blocks}
First, we evaluate the converged block scores with a feedback $F$ and evaluate the condition that the block scores are in the correct order.
For this analysis, we consider the fraction of matching edges for block score computation but similar lemmas extend for the uniformity score calculation.

\begin{Lemma}
For all blocks $B$,  with more than $\gamma \log n$ records, the matching probability score of $B$, $p(B)$ after a feedback of $F=O(\log^2 n)$ randomly chosen pairs is at most $(1-\alpha){|F|}/{{\gamma \log n \choose 2 }} + 1.5p'(1-|F|/{\gamma \log n \choose 2})$  with a probability of $1-1/n^3$, where $\alpha$ is the fraction of non-matching pairs in $B$, $\gamma$ is a constant and $p'=\mu_g(1-\alpha) + \mu_r\alpha$.\label{lem:atleast}
\end{Lemma}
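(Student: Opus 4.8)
The plan is to decompose the block score into its labeled (feedback) and unlabeled parts, compute the expectation exactly, and then attach a concentration argument that fits within the slack created by the factor $1.5$ and the large constant $\gamma$.

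First I would write $N = \binom{|B|}{2}$ and split $p(B) = \frac{1}{N}\left(\sum_{(u,v)\in F} p_m(u,v) + \sum_{(u,v)\notin F} X(u,v)\right)$. Because the ER phase makes no mistakes, each feedback pair contributes exactly $1$ if matching and $0$ if non-matching, while each unlabeled pair contributes its noisy similarity $X(u,v)$, whose expectation is $\mu_g$ on a matching edge and $\mu_r$ on a non-matching edge. Taking expectations over both the choice of the random feedback subset $F$ and the similarities, and using that a fraction $(1-\alpha)$ of the pairs in $B$ match, gives $E[p(B)] = (1-\alpha)\frac{|F|}{N} + p'\left(1 - \frac{|F|}{N}\right)$ with $p' = (1-\alpha)\mu_g + \alpha\mu_r$.

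Next I would reconcile the two denominators. Since $|B| > \gamma\log n$ we have $N \ge \binom{\gamma\log n}{2}$, so $\frac{|F|}{N} \le \frac{|F|}{\binom{\gamma\log n}{2}} =: g$, and because $|F| = O(\log^2 n)$ while $\binom{\gamma\log n}{2} = \Theta(\gamma^2 \log^2 n)$, choosing $\gamma$ a large enough constant forces $g \le 1/6$. A one-line calculation then shows $E[p(B)] \le (1-\alpha)g + 1.5\,p'(1-g)$ with a surplus of order $p'$: the first term only grows when $N$ is replaced by the smaller $\binom{\gamma\log n}{2}$, and writing $f = |F|/N \le g$ the coefficient difference $1.5(1-g) - (1-f) = 0.5 - 1.5g + f \ge 0.25$ leaves a constant-factor gap in front of $p'$. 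Because $\theta$ is constant, $\mu_g$ and $\mu_r$ — and hence $p'$ — are $\Theta(1)$, so this surplus is $\Omega(1)$.

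Finally I would add concentration. The quantity $\sum p_m$ is a sum of $N$ terms each lying in $[0,1]$ (the independent similarities $X$, together with the negatively associated indicators of membership in the fixed-size subset $F$), so Hoeffding's inequality \cite{hoeffding} bounds the deviation of $p(B)$ from $E[p(B)]$ by $O(\sqrt{\log n / N})$ except with probability $1/n^3$. Using $N = \Omega(\log^2 n)$ — which is exactly where the hypothesis $|B| > \gamma\log n$ enters — this deviation is $o(1)$, which the $\Omega(1)$ surplus from the previous step absorbs, completing the bound. I expect the main obstacle to be the bookkeeping of this final reconciliation: one must choose the constant $\gamma$ large enough, relative to the hidden constant in $|F| = O(\log^2 n)$, that $g$ is bounded well below $1/3$, and simultaneously check that the $\Theta(\log^2 n)$ lower bound on $N$ makes the Hoeffding tail small enough to beat $1/n^3$ while keeping the $o(1)$ deviation strictly below the constant slack. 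The analogous statement for the uniformity score follows because, as noted after Lemma~\ref{lem:sampling}, the $p_m$ values feed into the expected cluster sizes in the same way.
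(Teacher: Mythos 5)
Your proof is internally sound, but it takes a genuinely different route from the paper's, and the difference traces to how each of you reads the quantity being bounded. The paper's proof bounds the \emph{sampled} score: per the efficient-block-cleaning strategy, \afro{} scores $B$ on a sample $S$ of $\gamma\log n$ records chosen so that the feedback $F \subseteq S \times S$, so the denominator ${\gamma\log n \choose 2}$ in the lemma is the literal number of scored pairs and the feedback occupies a constant fraction of them; the factor $1.5$ then comes from a multiplicative Hoeffding bound with $\delta = 0.5$ on the sum of the ${\gamma\log n \choose 2} - |F|$ non-feedback similarities, whose total mean is $\Theta(\log^2 n)$, giving failure probability well below $1/n^3$. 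You instead bound the full-block score over $N = {|B| \choose 2}$ pairs, compute the expectation exactly, and manufacture the $1.5$ from denominator slack ($f = |F|/N \le g = |F|/{\gamma\log n \choose 2} \le 1/6$ for large $\gamma$), leaving a surplus of at least $0.25\,p' = \Omega(1)$ that absorbs an $o(1)$ additive Hoeffding deviation; this calculation is correct, including the handling of the random fixed-size subset $F$, and is arguably more careful about constants than the paper's tail simplification. But note what your reading gives up downstream: for $|B|$ polynomial in $n$ you have $f \to 0$, so $O(\log^2 n)$ feedback pairs leave the full-block $p(B)$ essentially at $p'$ --- your upper bound then holds only because it is slack, and the companion lower bound (Lemma~\ref{lem:atmost}) and the score-separation argument (Lemma~\ref{lem:converge}) would fail, since feedback could not move the score of a large block at all. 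The sampled interpretation, in which feedback is a constant fraction of the scored pairs, is what makes this lemma usable in the convergence proof; your route proves the literal inequality but not the statement Theorem~\ref{thm:main} actually relies on.
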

\begin{proof}
For block scoring, \afro{} considers a sample of $S= \gamma \log n$ records (where $\gamma$ is a large constant) and considers the sample ensuring that feedback $F\subseteq S\times S$  belongs to this sample. The total number of matching edges which have been identified with feedback over randomly chosen pairs is $(1-\alpha)|F|$. Let $X(u,v)$ be a random variable that refers to the similarity of the pair $(u,v)$ and $\mu(u,v)$ to its expected value.
For $S=\gamma\log n$,  the expected similarity of non-feedback edges within $C$ is 
\begin{eqnarray*}
\sum_{\substack{u,v\in S,\\ (u,v)\notin F}}\mu(u,v) &=& \sum_{(u,v)\in E^+} E[X(u,v)] + \sum_{(u,v)\notin E^+}E[X(u,v)]\\
 &=& \sum_{(u,v)\in E^+} \mu_g + \sum_{(u,v)\notin E^+}\mu_r\\
&=& \left({\gamma \log n \choose 2 }-|F|\right) (\mu_g(1-\alpha)+\mu_r\alpha)
\end{eqnarray*}

We use the Hoeffding inequality to bound the total similarity, $\sum X(u,v)$ of $T=\left({\gamma \log n \choose 2 }-|F|\right) = \gamma' {\log n \choose 2 }$, for some constant $\gamma'$, edges which do not have feedback.
\begin{eqnarray*}
\sum_{u,v\in B_c, (u,v)\notin F} X(u,v) \le  (1+\delta)\sum_{u,v\in B_c, (u,v)\notin F}\mu(u,v) 
\end{eqnarray*} with a probability of $1-e^{-2\delta^2\mu_T^2/ |T|}$ which can be simplified as $1-e^{-\delta^2\mu_T}$, since $\mu_r,\mu_g>1/2$
Hence, the probability of success simplifies to $> 1- 1/n^3$ after substituting $\delta = 0.5$.
Hence, the similarity score of the block $B$ is atmost $\left(\frac{|F|}{{\gamma\log n\choose 2}}(1-\alpha) + 1.5p'(1-|F|/{\gamma \log n \choose 2})\right)$ with a high probability.
\end{proof}

Similarly, we prove a lower bound on block score.
\begin{Lemma}
For all blocks $B$ with  $|B|\ge \gamma \log n$, the matching probability score after a feedback $F=O(\log^2 n)$ record pairs in $B$ is at least $(1-\alpha){|F|}/{{\gamma \log n \choose 2}} + 0.5 p'(1-|F|/{\gamma \log n \choose 2})$ with a probability of $1-1/n^3$, where $p'=\mu_g(1-\alpha) + \mu_r\alpha$ and  $\gamma$ is a constant. \label{lem:atmost}
\end{Lemma}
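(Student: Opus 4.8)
The plan is to follow the proof of Lemma~\ref{lem:atleast} line for line, replacing the upper tail of Hoeffding's inequality with its lower tail. As there, I would fix the sample $S$ of $\gamma \log n$ records used for scoring and condition on the event that the feedback set $F$ lies inside $S \times S$, so that the ${\gamma \log n \choose 2}$ intra-sample pairs split cleanly into the $|F|$ labelled (feedback) pairs and the remaining $T = {\gamma \log n \choose 2} - |F|$ unlabelled pairs. The score $p(B)$ is then the sum of the deterministic contribution of the feedback pairs and the random contribution $\sum_{(u,v)\notin F} X(u,v)$ of the unlabelled pairs, all divided by ${\gamma \log n \choose 2}$.

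For the first part, exactly $(1-\alpha)|F|$ of the feedback pairs are matching and contribute value $1$ each, while the non-matching feedback pairs contribute $0$, so their total contribution is deterministically $(1-\alpha)|F|$, giving the first term $(1-\alpha)|F|/{\gamma \log n \choose 2}$ after normalization. For the unlabelled pairs I would reuse the expectation computation of Lemma~\ref{lem:atleast}, namely $\sum_{(u,v)\notin F}\mu(u,v) = T\,p'$ with $p' = \mu_g(1-\alpha) + \mu_r\alpha$. The single genuine step is then the concentration bound: since the $X(u,v)$ lie in $[0,1]$ and, conditioned on $F$, are independent draws from the noisy edge model, the lower-tail Hoeffding inequality gives
\[
\sum_{(u,v)\notin F} X(u,v) \;\ge\; (1-\delta)\,T\,p'
\]
with probability at least $1 - e^{-2\delta^2 \mu_T^2 / |T|}$. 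Using $\mu_g, \mu_r > 1/2$ this simplifies to $1 - e^{-\delta^2 \mu_T}$, and substituting $\delta = 1/2$ drives the failure probability below $1/n^3$ while turning the coefficient $(1-\delta)$ into $0.5$. Dividing the sum of the two contributions by ${\gamma \log n \choose 2}$ then yields the stated lower bound $(1-\alpha)|F|/{\gamma \log n \choose 2} + 0.5\,p'(1 - |F|/{\gamma \log n \choose 2})$.

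The main point requiring care — and the only place the argument could go wrong — is the independence needed for the Hoeffding step, since $F$ is not a uniformly random subset of the pairs within $S$. I would dispose of this exactly as in Lemma~\ref{lem:atleast}, by conditioning on $F \subseteq S \times S$ so that the $T$ remaining similarities are fresh i.i.d.\ samples, making the application legitimate. Because nothing in this reasoning distinguishes the lower tail from the upper tail, the symmetric bound follows with an identical $1/n^3$ failure probability; I therefore expect no new obstacle beyond faithfully reusing the setup of the preceding lemma.
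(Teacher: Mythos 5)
Your proposal is correct and matches the paper's intent exactly: the paper omits an explicit proof of this lemma, stating only that it follows ``similarly'' to Lemma~\ref{lem:atleast}, and your write-up is precisely that symmetric argument --- same decomposition into the deterministic feedback contribution $(1-\alpha)|F|$ plus the unlabelled pairs with expectation $T\,p'$, with the lower-tail Hoeffding bound at $\delta=0.5$ producing the $0.5\,p'$ coefficient and the $1-1/n^3$ success probability. Your explicit handling of the independence issue (conditioning on $F \subseteq S \times S$) is the same device the paper uses in Lemma~\ref{lem:atleast}, so there is no gap and no divergence from the intended route.
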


Now, we analyze different scenarios of edge noise to understand the trade-off between required feedback and noise. 

\begin{Lemma}
For every pair of blocks, $B_c,B_d$  with more than $\gamma \log n$ records, the matching probability score estimate of $B_c$ with $1-\alpha$ fraction of matching edges is greater than the score of $B_d$ with $1-\beta$ (with $\alpha<\beta$) fraction of matching edges   with a  probability of $1-\frac{2}{n} $ if  $ \left((1-\alpha)\mu_g + \alpha \mu_r\right) >3 \left((1-\beta)\mu_g + \beta \mu_r\right)$
even in the absence of feedback.\label{lem:converge0feed}
\end{Lemma}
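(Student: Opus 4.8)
The plan is to obtain this as a direct consequence of the two-sided bounds already established in Lemmas~\ref{lem:atleast} and~\ref{lem:atmost}, specialized to the no-feedback regime $|F|=0$. Writing $p'_c = (1-\alpha)\mu_g + \alpha\mu_r$ and $p'_d = (1-\beta)\mu_g + \beta\mu_r$ for the expected per-edge scores of the two blocks, the goal is to sandwich the estimated scores tightly enough that the hypothesis $p'_c > 3\,p'_d$ forces a strict separation even though no feedback edges are available.

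First I would instantiate Lemma~\ref{lem:atmost} at $B_c$ with $|F|=0$: every term proportional to $|F|$ vanishes, leaving the lower bound $score(B_c) \ge 0.5\,p'_c$, which holds with probability $1-1/n^3$. Symmetrically, instantiating Lemma~\ref{lem:atleast} at $B_d$ with $|F|=0$ gives the upper bound $score(B_d) \le 1.5\,p'_d$, again with probability $1-1/n^3$. These are the only two probabilistic ingredients needed; everything else is deterministic.

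Next I would combine the two bounds. Conditioned on both events (which by a union bound hold together with probability at least $1-2/n^3$), we have $score(B_c) \ge 0.5\,p'_c$ and $score(B_d) \le 1.5\,p'_d$. The hypothesis $p'_c > 3\,p'_d$ is precisely the statement $0.5\,p'_c > 1.5\,p'_d$, so chaining the inequalities yields $score(B_c) > score(B_d)$. Since $2/n^3 \le 2/n$, the claimed success probability $1-2/n$ follows; and if the statement is meant to hold simultaneously for every pair among the $O(n)$ blocks in $\mathcal{H}$, the same union bound over all blocks keeps the total failure probability at $O(1/n^2)$, still well below $2/n$.

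The step I expect to require the most care is recognizing why the multiplicative factor $3$ appears and why it is exactly what is needed: it is the ratio $1.5/0.5$ between the worst-case upper slack of Lemma~\ref{lem:atleast} and the worst-case lower slack of Lemma~\ref{lem:atmost}, both of which stem from the Hoeffding deviation taken with $\delta=0.5$ (using $\mu_g,\mu_r > 1/2$). The two things worth double-checking are that the concentration hypotheses of the feeder lemmas genuinely apply at $|F|=0$ --- i.e. that $|B_c|,|B_d| > \gamma\log n$ makes the sampled sets large enough for the $1-1/n^3$ guarantees --- and that the ordering $\mu_g > \mu_r$ is invoked so that a larger matching fraction really does push $p'$ upward. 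No probabilistic work beyond the two feeder lemmas and a union bound is required.
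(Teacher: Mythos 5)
Your proposal is correct and follows essentially the same route as the paper: the paper's proof likewise instantiates Lemma~\ref{lem:atmost} (lower bound $0.5\,p'_c$) and Lemma~\ref{lem:atleast} (upper bound $1.5\,p'_d$) with $|F|=0$, observes that the hypothesis $p'_c > 3\,p'_d$ forces $score(B_c) > score(B_d)$ with probability $1-2/n^3$ per pair, and finishes with a union bound (the paper takes it over $\Theta(n^2)$ pairs of blocks rather than your slightly tighter per-block version, but both comfortably meet the stated $1-2/n$ guarantee). Your identification of the factor $3$ as the ratio $1.5/0.5$ arising from the Hoeffding slack at $\delta=0.5$ matches the paper's construction exactly.
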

\begin{proof}
Using Lemma \ref{lem:atleast} and \ref{lem:atmost}, we can evaluate the condition that $score(B_c)>score(B_d)$ with a probability of $1-\frac{2}{n^3}$, in the absence of feedback. In order to guarantee this for all blocks, we perform a union bound over $\Theta(n^2)$ pairs of blocks, guaranteeing the success rate to $1-o(1)$.
\end{proof}

The previous lemma shows a scenario where the noise is not high and the prior based estimation of matching probability scores give a correct ordering of blocks. Now, we consider the more challenging noisy scenario and show that $\Theta(\log^2 n)$ feedback per block is enough for correct ordering.

\begin{Lemma}
For every pairs of blocks, $B_c,B_d$  with more than $\gamma \log n$ records, the matching probability score estimate of $B_c$ with $1-\alpha$ fraction of matching edges is greater than the score of $B_d$ with $1-\beta$ (where $\alpha<\beta$) fraction of matching edges  with a  probability of $1-\frac{2}{n} $ whenever  the ER phase provides overall feedback of $\Theta(n\log^2 n)$ randomly chosen edges.\label{lem:converge}
\end{Lemma}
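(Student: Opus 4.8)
The plan is to combine the per-block score bounds of Lemmas~\ref{lem:atleast} and~\ref{lem:atmost} with a concentration argument showing that a global feedback budget of $\Theta(n\log^2 n)$ random edges supplies each large block with a constant fraction of feedback on its sampled pairs. Throughout I would write $p'=\mu_g(1-\alpha)+\mu_r\alpha$ and $q'=\mu_g(1-\beta)+\mu_r\beta$ for the prior-based expected scores of $B_c$ and $B_d$, and set $f=|F|/{\gamma\log n \choose 2}$ for the fraction of a block's ${\gamma\log n \choose 2}$ sampled intra-block pairs that carry feedback. The structural fact I would exploit is that feedback replaces a \emph{noisy} prior estimate by an \emph{exact} label, so the feedback portions of $score(B_c)$ and $score(B_d)$ separate cleanly as $(1-\alpha)f$ versus $(1-\beta)f$, i.e.\ by exactly $(\beta-\alpha)f$, with no variance. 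This is precisely the leverage that the prior alone lacks in the hard regime left open by Lemma~\ref{lem:converge0feed}, where $p'$ and $q'$ are close.

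First I would establish that each block of size at least $\gamma\log n$ receives $|F|=\Theta(\log^2 n)$ feedback edges, i.e.\ $f=\Omega(1)$. Using the paper's assumption that feedback is distributed uniformly at random across intra-block edges, together with the fact that the global budget $\Theta(n\log^2 n)$ is spread over the $\Theta(n\log^2 n)$ edges of the blocking graph, a constant fraction of each block's ${\gamma\log n \choose 2}=\Theta(\log^2 n)$ sampled pairs carries feedback in expectation. A Chernoff bound makes $f$ concentrate around this constant with per-block failure probability $e^{-\Omega(\log^2 n)}=n^{-\omega(1)}$, and a union bound over the $O(n)$ blocks keeps the total failure of this step negligible.

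Next I would substitute this constant $f$ into the two bounds. Lemma~\ref{lem:atmost} gives $score(B_c)\ge (1-\alpha)f+0.5\,p'(1-f)$ and Lemma~\ref{lem:atleast} gives $score(B_d)\le (1-\beta)f+1.5\,q'(1-f)$, each with probability $1-1/n^3$. Subtracting, the desired inequality $score(B_c)>score(B_d)$ reduces to
\[
(\beta-\alpha)\,f > \bigl(1.5\,q' - 0.5\,p'\bigr)(1-f).
\]
Since $q'\le\mu_g\le 1$, the right-hand factor is bounded by a constant, so for a constant separation $\beta-\alpha=\Omega(1)$ the inequality holds once $f$ exceeds a fixed threshold $C/(1+C)$ with $C=(1.5q'-0.5p')/(\beta-\alpha)=O(1)$ (and trivially whenever $1.5q'-0.5p'\le 0$). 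This is exactly the constant-fraction regime guaranteed by the first step, and it is consistent with a $\Theta(n\log^2 n)$ total budget because $f\,{\gamma\log n \choose 2}=\Theta(\log^2 n)$ per block. Finally, to upgrade from a single pair to \emph{every} pair, I would union-bound the $1-1/n^3$ guarantees of Lemmas~\ref{lem:atleast} and~\ref{lem:atmost} over the $O(n^2)$ block pairs, yielding the stated $1-2/n$ success probability (the feedback-concentration event contributes only a further $n^{-\omega(1)}$).

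The main obstacle I anticipate is the first step: turning a \emph{global} uniform feedback budget into a per-block guarantee of $f=\Omega(1)$ strong enough to survive the union bound over $\Theta(n)$ blocks. This requires the per-block deviation probability to be $n^{-\omega(1)}$, which holds only because each large block has $\Theta(\log^2 n)$ sampled pairs, so Chernoff yields $e^{-\Omega(\log^2 n)}$; a smaller sample would break the union bound. The loose $0.5$/$1.5$ multiplicative slack in Lemmas~\ref{lem:atleast}--\ref{lem:atmost} is then harmless provided $\beta-\alpha$ is a constant, since the clean feedback separation $(\beta-\alpha)f$ dominates the bounded prior-term error. If instead the gap were allowed to shrink with $n$, the threshold on $f$ (and hence the required feedback) would grow, which is exactly where the constant-gap assumption must enter.
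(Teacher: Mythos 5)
Your proposal is correct and follows essentially the same route as the paper: combine the upper and lower score bounds of Lemmas~\ref{lem:atleast} and~\ref{lem:atmost}, observe that the exact feedback terms separate the two scores once each large block carries $F=c\log^2 n$ feedback pairs, and union-bound over the $O(n^2)$ block pairs, with the $\Theta(n\log^2 n)$ global budget accounting for $\Theta(\log^2 n)$ feedback on each of the $\Theta(n)$ blocks. Your two refinements --- the Chernoff argument converting the global random budget into a per-block $f=\Omega(1)$ guarantee, and the explicit threshold $f>C/(1+C)$ exposing where the constant gap $\beta-\alpha$ enters --- are details the paper's proof asserts without elaboration, so they strengthen rather than change the argument.
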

\begin{proof}
Using Lemma \ref{lem:atmost}, $score(B_c)\ge {|F|}/{{\gamma \log n\choose 2}}(1-\alpha) + 0.5(\mu_g(1-\alpha)+\alpha\mu_r)(1-|F|/{\gamma \log n \choose 2}) $
and using Lemma \ref{lem:atleast}, $score(B_d) \le {|F|}/{{\gamma \log n \choose 2}}(1-\beta) + 1.5(\mu_g(1-\beta)+\beta\mu_r)(1-|F|/{\gamma \log n \choose 2})$
with a probability of $1-\frac{2}{n^3}$. Hence, $score (B_c) > score (B_d)$ holds if 
$F=c\log^2 n$, where $c$ is a large constant.
With a union bound over ${n\choose 2}$ pairs of blocks,  the score of any block $B_c$ (with higher fraction of matches) is higher than that of any block $B_d$ (with lower fraction of matches) with a probability of $1-\frac{2}{n}$. The total feedback to ensure $\Theta(\log^2 n)$ feedback on each block is $\Theta(n\log^2 n)$ as we consider $\Theta(n)$ blocks for scoring. 
\end{proof}


\myparagraph{Convergence for small blocks} The above analysis does not extend to blocks of size less than $\gamma \log n$. However,  all these blocks are ranked higher than the large blocks by TF-IDF. Hence, when \afro{} is initialized, the initial set of candidates generated will consider all these blocks before any of the larger blocks. In the worst case, there can be $\delta n$ such blocks, for some constant $\delta$ because our approach constructs a constant number of blocks per record (say $\delta$). Thus, the maximum number of candidates considered from small blocks is $\delta  n{\gamma\log n\choose 2}$ and all these candidates are considered in the first iteration of \afro{}. Following the discussion on small and large blocks, we  prove the main result of the convergence of \afro{}.

\begin{theorem}
 \afro{} pipeline achieves perfect recall with a feedback of $O (n\log^2 n)$ spread randomly across blocks.
\label{thm:main}
\end{theorem}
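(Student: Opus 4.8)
The plan is to reduce perfect recall to the spanning-tree condition of Lemma~\ref{lem:rec} and then show that, after a feedback of $O(n\log^2 n)$ randomly chosen pairs, the blocking graph \bgraph{}~$=(V,A')$ produced by \afro{} contains a spanning tree of every cluster $C$ of $\mathcal{C}=(V,E^+)$. I would split the block population into two regimes, the small blocks (size $<\gamma\log n$) and the large blocks (size $\ge \gamma\log n$), treat each with a different argument, and combine them at the end. The feedback budget is accounted as $\Theta(n)$ blocks each receiving $\Theta(\log^2 n)$ feedback, and the edge budget of $A'$ is kept at $O(n\log^2 n)$ by Lemma~\ref{lem:compl}.

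First I would dispatch the small blocks. Under the TF-IDF initialization every block of size $<\gamma\log n$ is ranked above every large block, so all of them are enumerated in the very first round of \afro{}. Because each record lies in a constant number $\delta$ of blocks there are at most $\delta n$ such blocks, contributing at most $\delta n\binom{\gamma\log n}{2}=O(n\log^2 n)$ pairs, which fits inside the edge budget of $A'$. Consequently every matching pair that co-occurs in some small block is already present in $A'$ after round one. In particular this includes the edges recovered by the refined intersection blocks of Section~\ref{sec:refinement}: by the correlation criterion $|B_{i,j}|>|B_i||B_j|/n$ these refined blocks are exactly the small, clean blocks that reconnect the clusters that the naive geometric construction fragments (Lemma~\ref{lem:disconn}).

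Next I would handle the large blocks through Lemma~\ref{lem:converge}. Over the $\Theta(n)$ scored blocks, a total feedback of $\Theta(n\log^2 n)$ guarantees $\Theta(\log^2 n)$ feedback per block, which by Lemma~\ref{lem:converge} ranks every block with a higher fraction of matching pairs above every block with a lower fraction, with probability $1-2/n$; the analogous bound for the uniformity factor separates a large clean block (whose records come almost entirely from one large cluster, so $p(B)\,u(B)$ is close to $1$) from a large but diverse block. Hence the cleanest large blocks are processed first, their intra-cluster edges are inserted into $A'$, and the large clusters of size $\Theta(n)$ — precisely those the no-feedback analysis cannot connect — become connected within the budget.

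Finally I would combine the regimes: for each cluster $C$ I would exhibit one qualifying block — an original small block, a refined intersection block, or a top-ranked large block — whose matching edges, once in $A'$, render $C$ connected; summing over all clusters yields a spanning tree of each cluster, i.e.\ a spanning forest of $\mathcal{C}$, so Lemma~\ref{lem:rec} gives Pair Recall $=1$ with the stated $O(n\log^2 n)$ feedback. The main obstacle I anticipate is exactly this combine step: the ordering guarantee of Lemma~\ref{lem:converge} certifies which blocks are clean but not that the clean blocks actually span each cluster. Closing the gap requires showing that for every cluster there exists at least one block that both (i) ranks above the noise threshold under the feedback-corrected score, so that it is enumerated within the edge budget, and (ii) internally connects $C$; for clusters fragmented by the generative model this rests on the intersection construction producing a qualifying refined block, and it is this tie between the fragmentation lower bound of Lemma~\ref{lem:disconn} and the recovery guarantee that I expect to be the delicate part of the argument.
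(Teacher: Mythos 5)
Your proposal follows essentially the same route as the paper's proof: the same split into small blocks (size $<\gamma\log n$, all enumerated in the first round under TF-IDF at a cost of $O(n\log^2 n)$ pairs) and large blocks (correctly ranked via Lemmas~\ref{lem:converge0feed} and~\ref{lem:converge} with $\Theta(\log^2 n)$ feedback each over $\Theta(n)$ blocks), tied to perfect recall through the spanning-tree criterion of Lemma~\ref{lem:rec}. The ``combine step'' you flag as the delicate gap --- that a correct relative ordering of block scores does not by itself certify that the enumerated clean blocks contain a spanning tree of every cluster --- is a step the paper's own proof also leaves implicit (it asserts the conclusion directly from the ranking guarantee plus the small-block budget), so your write-up is, if anything, more explicit about where the argument is thin.
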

\begin{proof}
For blocks with more than $\gamma \log n$ records,  Lemmas \ref{lem:converge0feed} and \ref{lem:converge} show that a block with higher fraction of matching pairs is ranked higher than a block with fewer matching pairs, if provided with a feedback of $\Theta (n\log^2 n)$. 
Blocks with less than  $\gamma \log n$ records have not been considered above but in the worst case, these blocks generate $O(n\log^2 n)$ candidates as the maximum number of blocks considered is $\Theta(n)$. This ensures that a feedback of $\Theta (n\log^2 n)$ is sufficient to ensure the stated result. 
\end{proof}

\begin{table*}[ht!]
\centering
\scriptsize
\caption{Number of nodes $n$ (i.e., records), number of clusters $k$ (i.e., entities), size of the largest cluster $\left|C_1\right|$, the total number of matches in the data set $|E^+|$ and the reference to the paper where they appeared first. } \label{tab:dataset}
\begin{tabular}{ |l | | c |c| c | c  |  c | c | l |}
	\hline
dataset          & \multicolumn{2}{c|}{$n$}       & $k$       & $\left|C_1\right|$ & $\left|E^+\right|$ & ref.                                  & description \\ \hline\hline
\texttt{songs} & 1M & 1M       & ~0.99M        & 2                  & 146K                & \protect\cite{das2017falcon}       & Self-join of songs with very few matches. \\\hline
\texttt{citations}     & 1.8M & 2.5M        & \review{3.8M}        & 2                 & 558K                & \protect\cite{das2017falcon}       &   Bibliographic records from DBLP and CiteSeer. \\\hline
\texttt{products}   & 2554 & 22K       & \review{23.5K}     & $2$               & 1154              & \protect\cite{gokhale2014corleone}    & A collection of products from retail companies website.   \\\hline

\texttt{cora}    & \multicolumn{2}{c|}{1.9K}      & 191       & 236                & 62.9K              & \protect\cite{cora2004}              & Title, author, venue, and date of scientific papers. \\\hline
\texttt{cars}    & \multicolumn{2}{c|}{16.5K}       &  48       & 1799                 & 5.9M               & \protect\cite{carsVision}     & Descriptions of cars with make and model. 
\\\hline
\texttt{camera}    & \multicolumn{2}{c|}{29.7K}       &   26K      &         91         & 102K               & \protect\cite{di2kg}     & A collection of cameras from over 25 retail companies.  \\\hline
\end{tabular}
\end{table*}

\myparagraph{Discussion}
 Lemma \ref{lem:converge}  considers the convergence of block scores when the feedback is provided randomly over $\Theta(\log^2 n)$ edges within a block. If the feedback is biased towards $\Theta(\log^2 n)$ non-matching edges, the scores of noisier blocks will drop quicker and \afro{} will converge faster. Similarly, if the ER algorithm queries pairs with higher similarity (e.g. edge ordering \cite{wang2013leveraging}) or grows clusters by processing nodes (e.g. node ordering \cite{vesdapunt2014crowdsourcing}), providing larger feedback due to transitivity, this will only facilitate the growth (reduction) in score of  blocks with higher (lower) fraction of matching pairs leading to faster convergence.

Finally, for the presented analysis, we assumed that oracle answers are correct. Nonetheless, (i) for small amount of oracle errors ($\sim 5\%$), we can leverage methods such as~\cite{galhotraSigmod,VerroiosErrors} to correct them, and (ii) in more challenging applications with up to $20\%$ erroneous answers, we show experimentally (see Section~\ref{sec:exp}) that \afro{} keeps converging, only at a slightly slower rate and demonstrates robustness. 

%

\section{Experiments}
\label{sec:exp}

In this section we empirically demonstrate the ability of \afro{} to boost the efficiency and effectiveness of blocking and thus to improve the performance of ER. We also demonstrate the fast convergence of \afro{} thus confirming our theoretical analysis in Section~\ref{sec:perf}, and the robustness of \afro{} in different scenarios, including errors in ER results. This section is structured as follows.
\begin{compactitem}
\item \emph{Section~\ref{sec:exptwo}.} We compare the efficiency and effectiveness of \afro{} to prior work showing higher pair recall and faster running time in all the data sets. 
\item \emph{Section~\ref{sec:expone}.} We analyze \afro{} when used in conjunction with different ER methods showing higher \emph{F-score} (up to 60\%) irrespective of the method of choice. 
\item \emph{Section~\ref{sec:expthree}.} We study the dynamic performance of \afro{} and show its ability to converge monotonically to high effectiveness without compromising on efficiency in different scenarios including errors in ER results. 
\end{compactitem}

\subsection{Setup}
\label{sec:setup}

\noindent Before showing results we describe our experimental setup and the methods considered in our experiments.

\myparagraph{Experimental set-up} We implemented the algorithms in Java and machine learning tools in Python. The code runs on a server with 500GB RAM (all codes used $\leq$ 50GB RAM) and 64 cores. We consider six real-world data sets (see Table \ref{tab:dataset}) of various sizes and diverse cluster distributions. 
\dfnew{All the datasets are publicly available and come with their own manually curated ground truth.} We use publicly available pre-trained deep learning models\footnote{\url{https://cloud.google.com/vision}, \url{ https://www.ibm.com/watson/services/visual-recognition/}} to generate text descriptions of the image data (\texttt{cars}). For implementing the hierarchy we observed that we can trim at a depth of $10$ without any significant drop in the performance. 


\myparagraph{Blocking methods} 
We consider \dfnew{8} strategies for the blocking sub-tasks described in Section~\ref{sec:prel} and \dfnew{combine such strategies into 16 different pipelines. We study such pipelines with and without our \afro{} approach on top. 
\begin{compactitem}
\item[$\mathcal{BB}$)] We consider 4 methods for Block Building ($\mathcal{BB}$) and follow the suggestions of \cite{papadakis2018return} for their configuration. Standard blocking~\cite{papadakis2015schema} (\texttt{StBl}) generates a new block for each text token in the dataset. Q-grams blocking~\cite{gravano2001approximate} (\texttt{QGBL}) generates a new block for each 3-gram of characters. Sorted neighborhood~\cite{hernandez1995merge} (\texttt{SoNE}) sorts the tokens for each attribute and generates a new block for every sliding window of size 3 over these sort orders. Canopy clustering~\cite{mccallum2000efficient} (\texttt{CaCl}) generates a new block for each cluster of high similarity records (calculated as unweighted Jaccard similarity). We construct multiple instances of canopies (blocks), one for each attribute (i.e., based on the similarity of record pairs with respect to that attribute) and one based on all attributes together. 
\item[$\mathcal{BC}$)] We consider 2 traditional block scoring methods for Block Cleaning ($\mathcal{BC}$), dubbed \texttt{TF-IDF}~\cite{schutze2008introduction} and uniform scoring (\texttt{Unif}). For comparison purposes, we process blocks in non-increasing score order until the number of intra-block pairs equals to a parameter $M$ and then prune the remaining blocks. We set default $M$ to $10$ million.\footnote{\review{We note that setting a score threshold rather than a limit on the number of pairs would not take into account different scores distributions fairly.}} 
\item[$\mathcal{CC}$)] We consider 2 popular methods for Comparison Cleaning ($\mathcal{CC}$), dubbed meta-blocking~\cite{metablocking} (\texttt{MB}) and \bloss{}~\cite{dal2018bloss}, and follow the suggestions of \cite{metablocking} for their configuration. Weights of record pairs are set to their Jaccard similarity weighted with the block scores from the $\mathcal{BC}$ sub-task. We consider the top $100$ high-weight pairs for each record and prune the remaining record pairs. 
\end{compactitem}}

%
\noindent We recall that variants of our approach are denoted as \afro(,,) while traditional blocking pipelines without feedback are denoted as $\mathcal{B}$(,,) where the parameters correspond to techniques for $\mathcal{BB},\ \mathcal{BC}$ and $\mathcal{CC}$ sub-tasks, respectively. Default methods are \texttt{StBl} for $\mathcal{BB}$, \texttt{TF-IDF} for $\mathcal{BC}$ and \texttt{MB} for $\mathcal{CC}$. Default $\phi$ \dfnew{for \afro{}} is $0.01$.

\begin{figure*}[ht!]
\centering
\includegraphics[width= \textwidth]{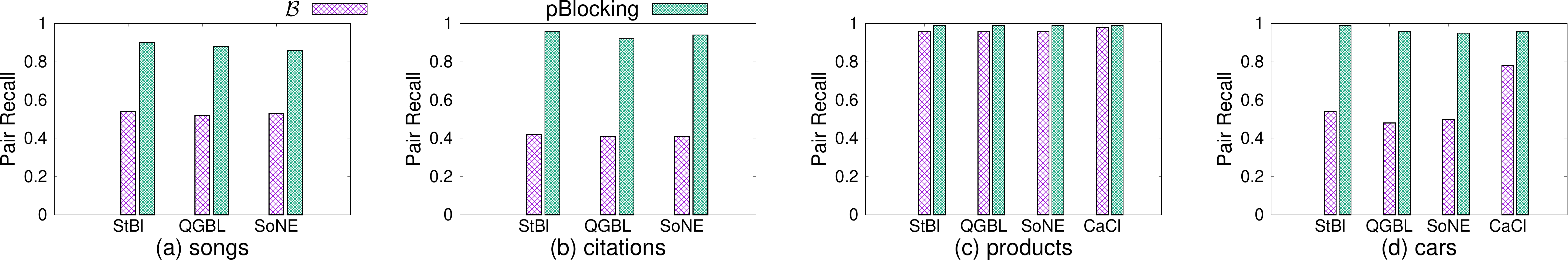}\\
\includegraphics[width= \textwidth]{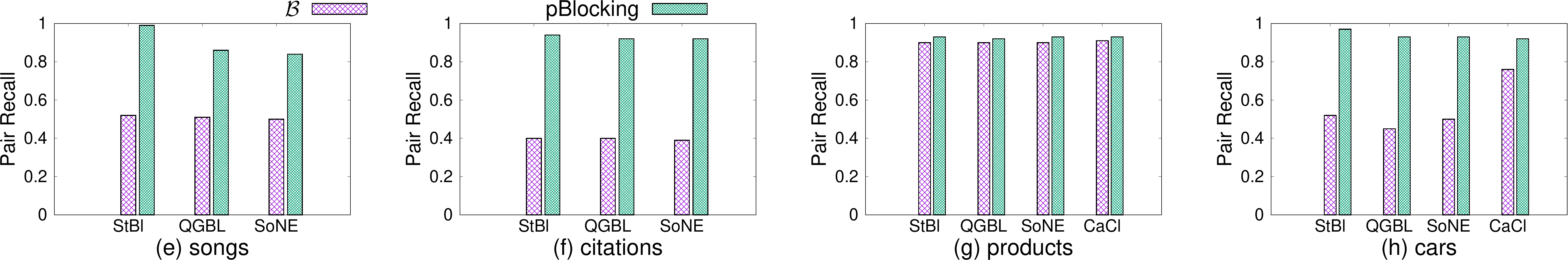}\\
\caption{Pair recall of $\mathcal{B}$(,\texttt{TF-IDF},) and \afro{}(,\texttt{TF-IDF},) with varying $\mathcal{BB}$ and $\mathcal{CC}$. (a-d) use \texttt{MB} and (e-h) use \texttt{BLOSS}. \texttt{CaCl} did not finish within 24 hrs on \texttt{songs} and \texttt{citations} data set. \label{fig:bbafro}}
\end{figure*}

\myparagraph{Pair matching and Clustering methods} We consider the following 3 strategies that leverage the notion of an \emph{oracle} to answer pairwise queries of the form ``does $u$ match with $v$?'' (a) $\texttt{Edge}$ ~\cite{wang2013leveraging} with default parameter setting. (b) \eager{}~\cite{galhotraSigmod}, the state-of-the-art technique to solve ER in the presence of erroneous oracle answers. (c) \texttt{Node} is the ER mechanism derived from~\cite{vesdapunt2014crowdsourcing} and was proposed as an improvement over \texttt{Edge}.
The \eager{} algorithm  handles noise for data sets with matching pairs much larger than $n$ and performs similar to \texttt{Edge} for data sets that have fewer matching pairs \cite{firmani2016online}, so we use it as default. 
We implement the abstract oracle tool with a classifier using scikit learn\footnote{\url{https://scikit-learn.org/stable/}} 
in Python. We consider two variants, Random forests (default) and a Neural Network. \review{The random forest classifier is trained with default settings of scikit learn. The neural network is implemented with a 3-layer convolutional neural network followed by two fully connected layers.  We  used word2vec word-embeddings  for each token in the records.  } In structured data sets, we extract similarity features for each attribute as in~\cite{das2017falcon}. For \texttt{cars} we use the text descriptions to calculate text-based features along with image-based features. Given the unstructured nature of text descriptions for some data sets we extracted POS tags using Spacy\footnote{\url{https://spacy.io/}}. All the considered classifiers are trained off-line with less than $1,000$ labelled pairs, \dfnew{containing a similar amount of matching and non-matching pairs. These labelled record pairs are the ones provided by the respective source for \texttt{citations}, \texttt{songs}, \texttt{products} and \texttt{camera} (the papers mentioned in Table~\ref{tab:dataset}, column ``ref.''). For \texttt{cars} and \texttt{cora} we perform active learning (following the guidelines of \cite{das2017falcon}) to identify a small set of labelled examples for training, which  are excluded from the evaluation of blocking quality.} 

\subsection{Benefits of Progressive Blocking}
\label{sec:exptwo}

In this experiment we evaluate the empirical benefit of \afro{} compared to previous blocking strategies. 


\myparagraph{Blocking effectiveness} Figure \ref{fig:bbafro} compares the Pair Recall (PR) of \afro{} and of a traditional blocking pipeline $\mathcal{B}$ for different choices of the block building and comparison cleaning techniques.  We use default block cleaning \texttt{TF-IDF} and default $M$ value. \afro{} achieves more than $0.90$ recall for all the data sets and with all the block building strategies, \review{ demonstrating its robustness to different cluster distributions and properties of the data.} Conversely, most of the considered block building strategies (\texttt{StBl}, \texttt{QGBL} and \texttt{SoNE}) have significantly lower recall even when used together with \bloss{} for selecting the pairs wisely. \texttt{QGBL} and \texttt{SoNE} help to improve recall in data sets with spelling errors but due to very few spelling mistakes in our data sets \texttt{StBl} has slightly higher recall. 
In terms of the data sets, the no-feedback blocking approach $\mathcal{B}$ has varied behavior. \texttt{products} and \texttt{camera} yield the best performance due to the presence of relatively cleaner blocks that help to easily identify matching pairs even without feedback. \texttt{songs} has higher noise and \texttt{cars} has a skewed distribution of clusters thereby making it harder for previous techniques to handle.  For this analysis, we do not consider \texttt{cora} (the smallest  data set) as it has less than $2$M pairs and hence, all techniques achieve perfect recall. We observed similar trends with \texttt{Unif} method for block cleaning in place of \texttt{TF-IDF} (discussed in Appendix). 

 
\begingroup
\setlength{\tabcolsep}{1pt} 
\begin{table}
\caption{\review{Running time comparison of $\mathcal{B}$(\texttt{StBl},\texttt{TF-IDF},\texttt{MB}) and \afro{}(\texttt{StBl},\texttt{TF-IDF},\texttt{MB}). \label{tab:timeafro}}}
\centering
\scriptsize
{
\begin{tabular}{ |l | | c | c |c|c|}
	\hline
	Dataset&\multicolumn{2}{c|}{0.95 Pair recall}  & \multicolumn{2}{c|}{Time budget: 1 hr} \\\hline
           &\afro{}     & $\mathcal{B}$ & \afro{}& $\mathcal{B}$    \\ \hline \hline
\texttt{songs}   &    29min  & 3hrs  & 0.96 &  0.78 \\
\texttt{citations}   &   55min    & Did not finish in 24 hrs& 0.97 & 0.64 \\
\texttt{cars}   &   4hr 10min    & 12hr  & 0.78& 0.54\\
\hline
\texttt{products}   &   6min 25sec  & 6min 13sec & 0.99&0.98    \\
\texttt{camera} &     12min     & 13min  & 0.97 &0.96 \\
\texttt{cora} &     5min 20 sec   & 5min 15 sec & 1 &  1  \\
\hline
\end{tabular}
}
\end{table}
\endgroup
\myparagraph{Blocking efficiency}
\review{In this experiment, we consider two different  settings  to compare (i) the time required to achieve more than $0.95$ pair recall (ii) the pair recall when the pipeline is allowed to run for a fixed amount of time ($1$ hour).  We run each technique for various values of $M$ and choose the best value that satisfies the required constraints. In the case of fixed budget of running time = $1$hour,  we run \afro{}'s feedback loop for the most iterations that allow the pipeline to process all records in the required time limit.
}

\review{Table~\ref{tab:timeafro} compares the total time required to achieve $0.95$ pair recall for each dataset\footnote{\review{This includes the time required by each approach to perform pair matching on the generated candidates.}}. \afro{} provides more than $3$ times reduction in running time for most large scale datasets in this setting.  In terms of total number of pairs enumerated, \afro{} considers around M=10 million to achieve $0.95$ recall for \texttt{citations} as opposed to more  than 200 million for $\mathcal{B}$.  We observed similar results for other block building (\texttt{SoNE}, \texttt{QGBL} and \texttt{CaCl}) and cleaning strategies. }

\review{The last two columns of Table~\ref{tab:timeafro} compare the pair recall of the generated candidates when the technique is allowed to run for $1$ hour.  \afro{} achieves better pair recall as compared to $\mathcal{B}$ across all datasets. The gain in recall is higher for larger datasets. The performance of \afro{} for \texttt{cars} is lower than that of \afro{} in Figure~2d because the feedback loop does not converge completely in 1hr. The pipeline runs for 8 rounds of feedback in this duration. This is consistent with the performance of \afro{} in Figure~4a, where the feedback is turned off after $10$ iterations.  }\review{The difference in performance of \afro{} and $\mathcal{B}$ is not high for small datasets of low noise like \texttt{products}, \texttt{cora} and \texttt{camera} as opposed to \texttt{songs}, \texttt{citations} and \texttt{cars}.
}

\subsection{Robustness of Progressive Blocking}
\label{sec:expone}
In this section, we evaluate the performance of \afro{}  with varying  strategies for pair matching and clustering in Algorithm~\ref{algo:afro} (referred to as $W$ in the pseudo-code). \dfnew{For this analysis, we use the default setting for $M$ as in Figure~\ref{fig:bbafro}.}

\myparagraph{Varying ER methods} \df{We recall that \afro{} can be used in conjunction with a variety of techniques for pair matching and clustering. }
Table \ref{tab:FL} compares the Pair Recall of the blocking graph, when using the different progressive ER methods mentioned in Section~\ref{sec:setup}. The final Pair Recall of \afro{} is more than $0.90$ in all data sets \review{and matching algorithms} except \texttt{citations} for node ER and more than $0.85$ in all cases.  This observation confirms our theoretical analysis in Section~\ref{sec:theory}, demonstrating that the feedback loop can improve the blocking, irrespective of the ER algorithm under consideration (which is a desirable property for a blocking algorithm). The above comparison of ER performance considers the algorithms with a default choice of  Random Forest classifier as the oracle.  
We observed that the feedback from the ER phase when using a Neural Network classifier contains slightly more errors but the blocking phase with \afro{} shows similar recall. We provide more discussion on ER errors in Section~\ref{sec:expthree}.


\myparagraph{Benefit on the final ER result} Table~\ref{tab:eager}  compares the  F-score of the final ER results when blocking is \review{performed} with and without \afro{}. In this experiment we use the state-of-the-art algorithm, \eager{} as the pair matching algorithm with default parameter values. Final F-score achieved with feedback is  more than 0.9 for all data sets except \texttt{products}. For \texttt{songs}, \texttt{citations} and \texttt{cars} the F-score of \afro{} is 1.5 times more than that of traditional blocking pipeline without feedback, thus demonstrating the effects of better effectiveness and efficiency of blocking. 

\begingroup
\setlength{\tabcolsep}{2pt} 
\begin{table}
\centering
\scriptsize
\caption{\dfnew{(a) Pair recall of \afro{} on varying ER strategies. (b) Comparison of the final F-score of the \texttt{Eager} method. The  blocking graph is computed with \afro{}(StBl, TF-IDF, MB) and $\mathcal{B}$(StBl, TF-IDF, MB) (both with default settings).}}
\vspace{-2mm}
{\subfloat[\label{tab:FL}]{\begin{tabular}{ |l | | c | c |c|c|}
	\hline
	    \multirow{2}{*}{Dataset} & \multirow{2}{*}{$\mathcal{B}$}\multirow{2}{*} & \multicolumn{3}{c|}{\afro{}} \\ \cline{3-5}
		    & & \texttt{Edge}      & \texttt{Node}   & \texttt{Eager}          \\
		\hline\hline
\texttt{songs}   &    0.53   & {0.9}  &  {0.9} & {0.9} \\
\texttt{citations}   &    0.42   & {0.90}  &0.87 & 0.95   \\
\texttt{cars}   &    0.54   & {0.98}   & 0.99 & 0.98  \\
\hline
\texttt{products}   &    0.95   & {0.98} & 0.98 & 0.98   \\
\texttt{camera} &     0.92     & 0.97 & 0.97 & 0.97  \\
\texttt{cora} &     1     & 1 & 1 & 1  \\
\hline
\end{tabular}}}
{
\subfloat[\label{tab:eager}]{
\begin{tabular}{ |p{1.2 cm} | | c | c|}
	\hline
\multirow{2}{*}{Dataset}  & \multirow{2}{*}{$\mathcal{B}$}     & \multirow{2}{*}{\afro{}}     \\
  &    &   \\\hline\hline
\texttt{songs} &   0.65  & \textbf{0.92}    \\
\texttt{citations}     &   {0.56}  & \textbf{0.92}   \\
\texttt{cars}     &   0.64  &   \textbf{0.94}    \\
\hline
\texttt{products}   &   0.71 &\textbf{0.72}     \\
\texttt{camera}     &   0.92  &  \textbf{0.95}   \\
\texttt{cora}     &   0.99  &  0.99     \\
\hline
\end{tabular}}}
\end{table}
\endgroup

\subsection{Progressive Behavior}
\label{sec:expthree}

This section studies the performance of \afro{} dynamically, in terms of (i) effect of feedback frequency $\phi$, (ii) effect of error on convergence, and 
(iii) convergence of the blocking result in the maximum number of rounds.

\begin{figure}
\centering
\subfloat[{Feedback Frequency }\label{fig:feedback}]{\includegraphics[width=0.24\textwidth]{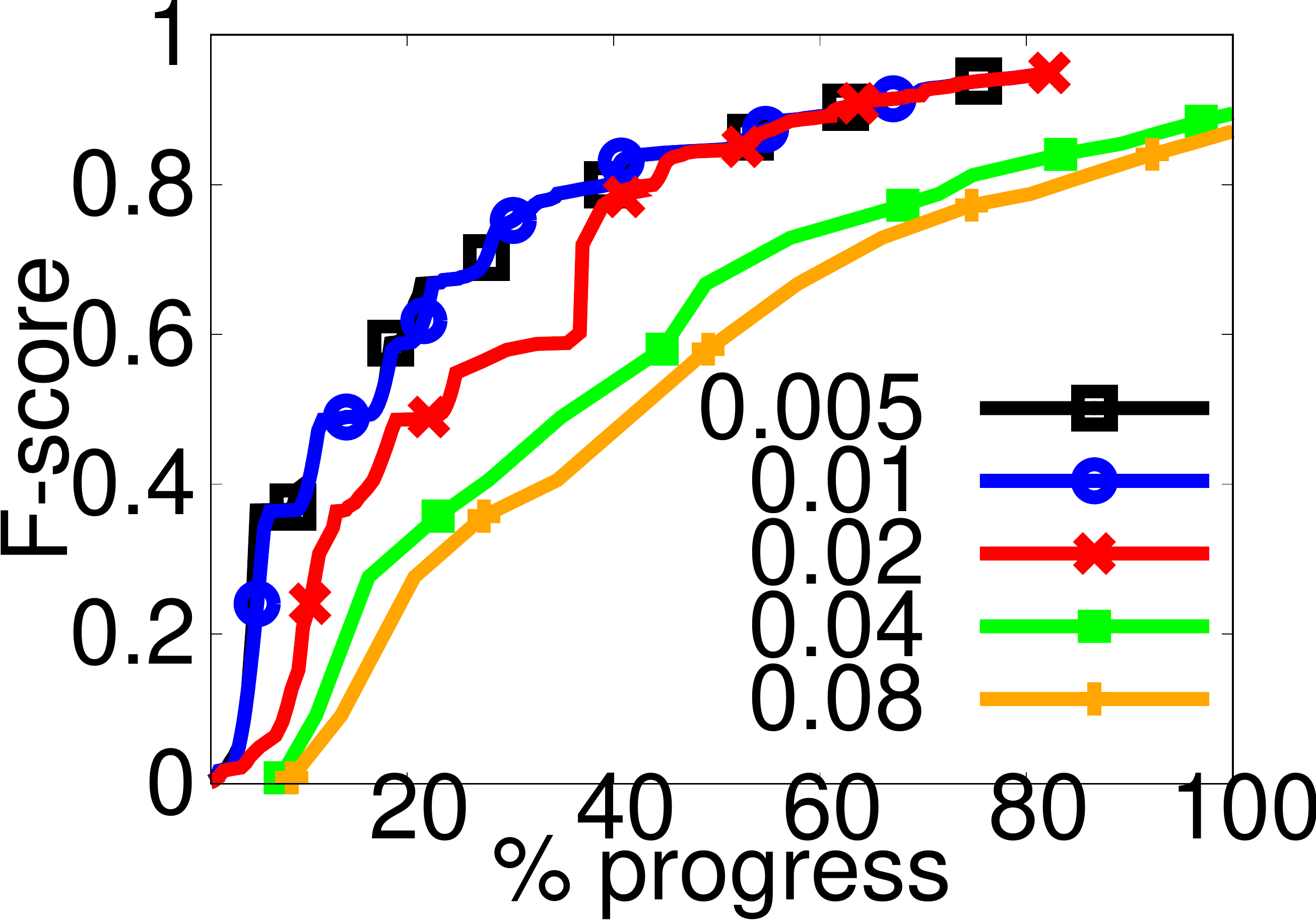}}
\subfloat[Oracle error \label{fig:error}]{\includegraphics[width=0.24\textwidth]{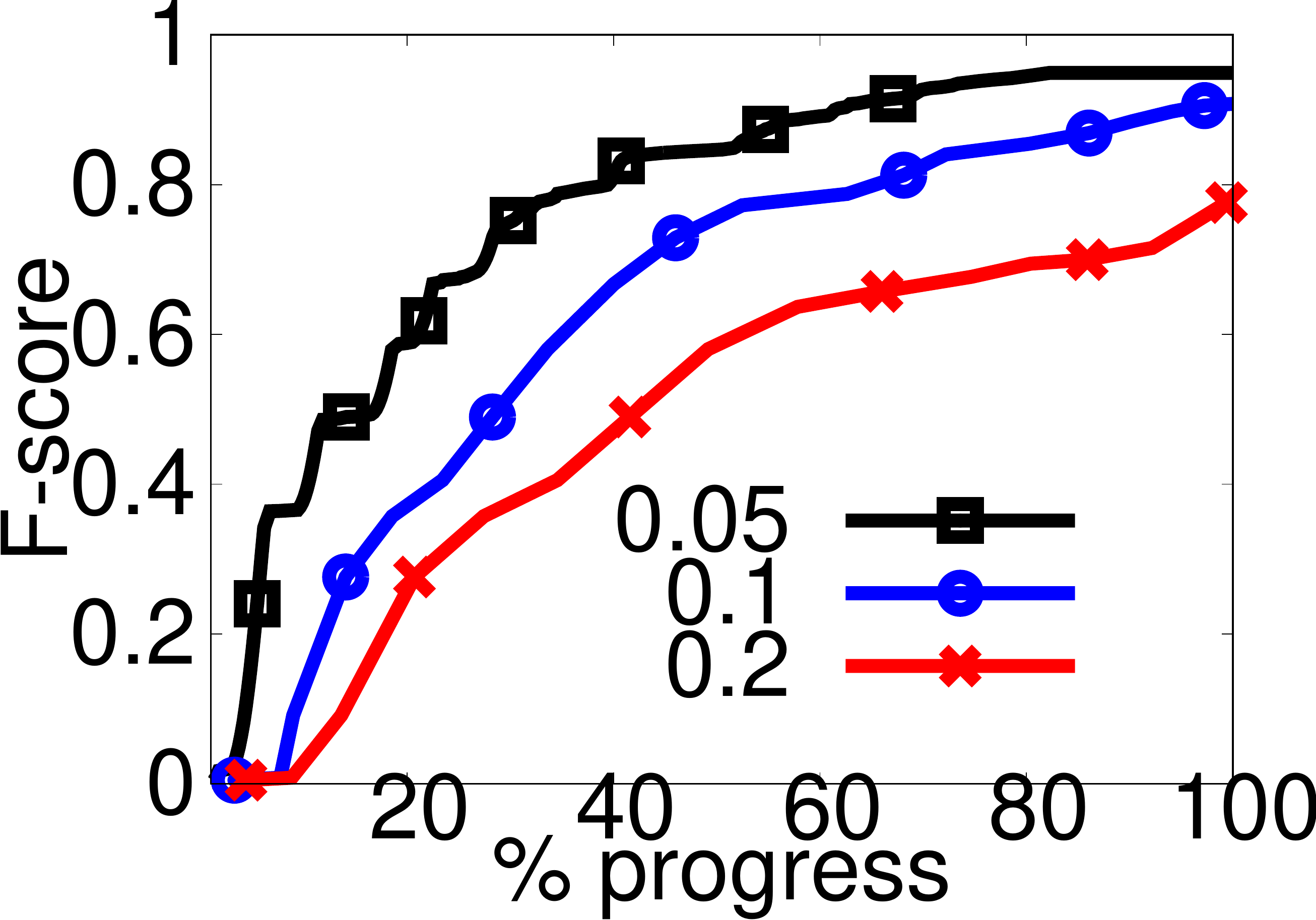}}

\caption{Progressive behavior of \afro{} with varying feedback frequency and errors in the feedback (\texttt{cars}).  \label{fig:freq}}
\end{figure}

\myparagraph{Feedback frequency} {
The $\phi$ parameter represents the fraction of newly processed record pairs after which feedback is sent from the partial ER results back to the blocking phase. Therefore, the parameter $\phi$ can control the maximum number of rounds of \afro{} and how often the blocking graph is updated. In order to describe the effect of varying $\phi$, Figure~\ref{fig:feedback} shows the F-score of the ER results as a function of the percentage of rounds completed, that we refer to as the \emph{blocking progress}.\footnote{Not to be confused with the ``ER progress'' in Algorithm~\ref{algo:afro}}. In the figure, different curves correspond to different feedback frequencies, including the default one (in blue). 
%
%
This plot shows that by updating the blocking graph more frequently (and thus increasing the number of rounds), the F-score \review{increases faster when $\phi$ is reduced from 0.08 to 0.01. The plot also shows that the F-score corresponding to smaller values of $\phi$ (up to 0.01) is consistently higher or equal as compared to the F-score corresponding to larger values of $\phi$.} 
Given that the running time of the pipeline increases with more frequent updates (\review{smaller values of $\phi$}), there appears to be limited value in  decreasing $\phi$ below $0.01$, thus justifying our choice for its default setting.
}

\myparagraph{Effect of ER errors} 
\review{As in} the previous experiment, Figure~\ref{fig:error} shows the effect of synthetic error in the ER results by varying the fraction of \dfnew{erroneous oracle answers. To this end, we corrupted the oracle answers randomly so as to get the desired amount of noise.} We note that even when 1 out of 5 answers are wrong, the final F-score is almost $0.8$, growing monotonically from the beginning to the end at the cost of a few extra pairs compared. \afro{} converges slower with higher error but the error does not accumulate and it performs much better than any other baseline. Additionally, we observed that even with 20\% error, the pair recall of \afro{} is as high as $0.98$ even though the F-score is close to 0.8 due to mistakes made by pair matching and clustering phase. This confirms that \afro{} is robust to errors in ER results and maintains high effectiveness to produce ER results with high F-score.

\begin{figure}
\centering
\includegraphics[width=0.48\textwidth]{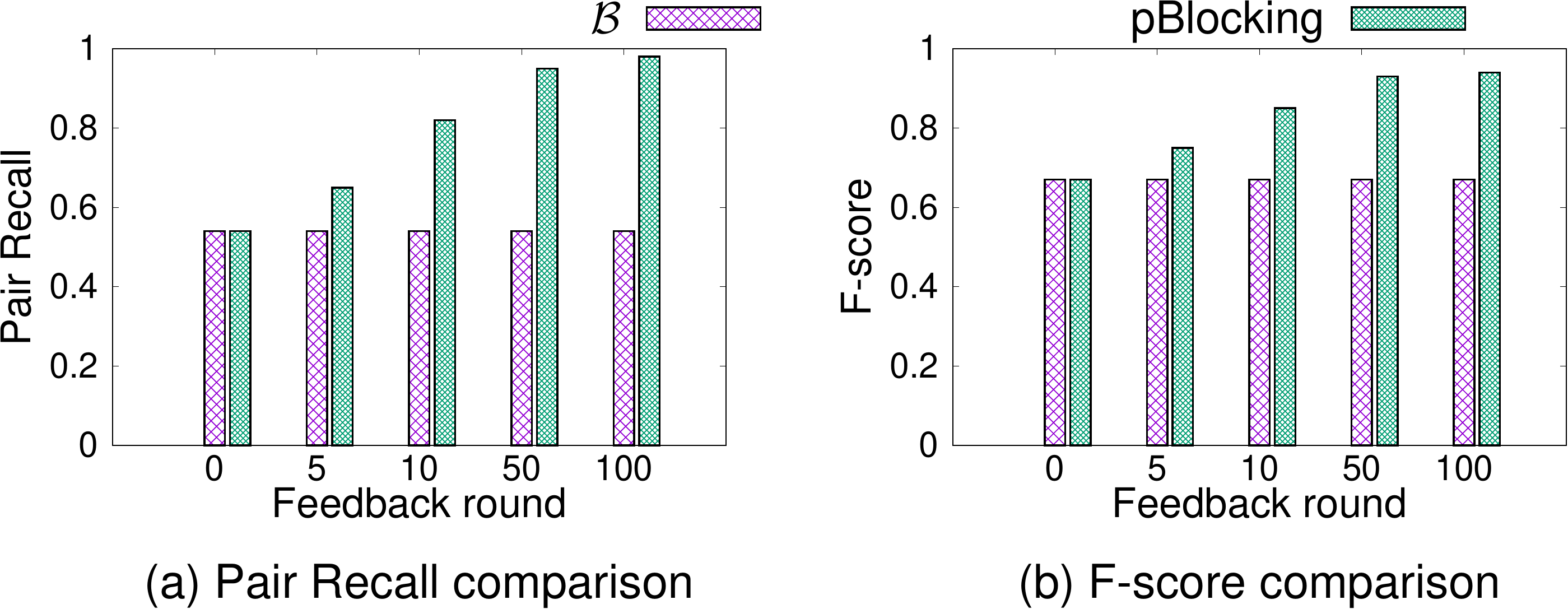}
\caption{Effect of feedback loop in  \texttt{cars} dataset.\label{fig:cars}}
\end{figure}
\myparagraph{Score Convergence} Figure 4a compares the Pair Recall (PR) of the blocking phase of \afro{}(\texttt{StBl},\texttt{TF-IDF},\texttt{MB}) after every round of feedback with the recall of $\mathcal{B}$(\texttt{StBl},\texttt{TF-IDF},\texttt{MB}). Both  $\mathcal{B}$ and \afro{} start with PR value close to 0.52 and \afro{}  consistently improves with more feedback achieving PR close to $0.9$ in less than 18 rounds. This shows the convergence of \afro{}'s score assignment strategy to achieve high PR values even with minimal feedback. Figure 4b compares the final F-score achieved by our method if the feedback loop is stopped after a few rounds. It shows that \afro{} achieves more than 0.8 F-score even when stopped after 10 rounds of feedback. This experiment validates that the convergence of block scoring leads to the convergence of the entire ER workflow.


\subsection{Key takeaways}
The empirical analysis in the previous sections has demonstrated \afro{}'s benefit on final F-score and its ability to boost effectiveness of blocking techniques across all data sets without compromising on efficiency. The key takeaways from our analysis are summarized below.  
\begin{compactitem}
     \item \afro{} improves Pair Recall irrespective of the technique used for block building, block cleaning or comparison cleaning (Figure \ref{fig:bbafro}), \df{thus demonstrating its flexibility}.
    
    \item Feedback based scoring helps in particular to boost blocking efficiency and effectiveness for noisy datasets with many matching pairs (i.e. containing large clusters) such as \texttt{cars}, by enabling accurate selection of cleanest blocks.
    
    \item The block intersection algorithm helps in particular with data sets with fewer matching pairs (i.e. with mainly small clusters) such as \texttt{citations} and \texttt{songs}, by providing a way to build small focused blocks with high fraction of matching pairs. 
    Block intersection can also help in data sets like \texttt{products} and \texttt{camera} but the benefit is not as high as that in \texttt{songs}, because many records in such data sets have unique identifiers (e.g. product model IDs) and thus initial blocks are reasonably clean.
\end{compactitem}



\section{Related work}
\label{sec:related}

Blocking has been used to scale Entity Resolution (ER) for a very long time. However, all the techniques in the literature have considered blocking as a preprocessing step and suffered from the trade-off between effectiveness and efficiency/scalability. 
We divide the related work into two parts: advanced blocking methods which we improve upon, and progressive ER methods which can be used to generate a limited amount of matching/non-matching pairs to send as a feedback to our blocking computation. 


\myparagraph{Advanced blocking methods} There are many blocking methods in the literature with different internal functionalities and solving different blocking sub-tasks. In this paper, we considered four representative block building strategies, namely standard blocking~\protect\cite{papadakis2015schema}, canopy clustering~\protect\cite{mccallum2000efficient}, sorted neighborhood~\protect\cite{hernandez1995merge} and q-grams blocking~\protect\cite{gravano2001approximate}. It is well-known that such techniques can yield a fairly dense blocking graph when used alone. We refer the reader to~\cite{papadakis2016comparative} for an extensive survey of various blocking techniques and their shortcomings. \dfnew{Such block building strategies can be used as the  method $X$ in our Algorithm~\ref{algo:afro}.}

Recent works have proposed advanced methods that can be used in combination with the mentioned block building techniques by focusing on the comparison cleaning sub-task (thus improving on  efficiency). 
The first technique in this space is  \emph{meta-blocking}~\cite{metablocking}.  Meta-blocking \dfnew{aims to extract the most similar pairs of records by leveraging block-to-record relationships and} can be very efficient in reducing the number of unnecessary pairs produced by traditional blocking techniques, but \review{it is} not always easy to configure. To this end, follow-up works \dfnew{such Blast~\cite{simonini2016blast} use ``loose'' schema information to distinguish promising pairs, while~\cite{bilenko2006adaptive} and SNB~\cite{papadakis2014supervised} rely on a sample of labeled pairs for learning accurate blocking functions and classification models respectively. Finally,} the most recent strategy \bloss{}~\cite{dal2018bloss} uses \emph{active learning} to select such a sample \dfnew{and configure the meta-blocking}. 
The goal of traditional  meta-blocking~\cite{metablocking} and its follow-up techniques like \bloss{}~\cite{dal2018bloss} prune out low similarity candidates from the blocking graph generated using various block building strategies discussed above. Their performance is highly dependent on the effectiveness of block building techniques and the quality of blocking graph. On the other hand, \afro{} constructs meaningful blocks that effectively capture majority of the matching pairs and scores each block based on their quality to generate   fewer non-matching pairs in the blocking graph. Meta-blocking techniques compute the blocking graph statically, prior to ER, and thus can be used as the $Z$ method in our Algorithm~\ref{algo:afro}.  In Figure~\ref{fig:bbafro} we compare with classic meta-blocking and \bloss{}, as the latter shows its superiority over Blast and SNB.

\myparagraph{Progressive ER} Many applications need to resolve data sets efficiently but do not require the ER result to be complete.  Recent literature described methods to compute the best possible partial solution. Such techniques include \emph{pay-as-you-go} ER~\cite{whang2013pay} that use ``hints'' on records that are likely to refer to the same entity and more generally \emph{progressive} ER such as the schema-agnostic method in~\cite{simonini2018schema} and the strategies in~\cite{altowim2014progressive}\cite{papenbrock2015progressive} that consider a limit on the execution time. In our discussion, we considered oracle-based techniques, namely \texttt{Node}~\cite{vesdapunt2014crowdsourcing}, \texttt{Edge}~\cite{wang2013leveraging}, and \texttt{Eager}~\cite{galhotraSigmod}. Differently from other progressive techniques, oracle-based methods consider a limit on the number of pairs that are examined by the oracle for matching/non-matching response. Such techniques were originally designed for dealing with the crowd but they can also be used with a variety of classifiers due to their flexibility.  All these techniques naturally work in combination with \afro{} by sending as feedback their partial results. 

\myparagraph{Other ER methods} In addition to the above methods, we mention works on ER architectures that can help users to debug and tune parameters for the different components of ER~\cite{gokhale2014corleone,das2017falcon,konda2016magellan,papadakis2018return}. Specifically, the approaches in~\cite{gokhale2014corleone,das2017falcon} show how to leverage the crowd in this setting. All of these techniques are orthogonal to the scope of our work and we do not consider them in our analysis. 
\df{The previous work in~\cite{whang2009entity} proposes to greedily merge records as they are matched by ER, while processing the blocks one at a time. Each merged record (containing tokens from the component records) is added to the unprocessed blocks, permitting its participation in the subsequent matching and merging by their iterative algorithm. 
Limitations of processing blocks one at a time has been shown in more recent blocking works~\cite{metablocking}. 
} 
\section{Conclusions}
\label{sec:concl}
We have proposed a new blocking algorithm, \afro{} that progressively updates the relative scores of blocks and constructs new blocks by leveraging a novel feedback mechanism from partial ER results. Most of the techniques in the literature perform blocking as a preprocessing step to prune out redundant non-matching record pairs. However, these techniques are sensitive to the distribution of cluster sizes and the amount of noise in the data set and thus are either highly efficient with poor recall or have high recall with poor efficiency. \afro{} can boost the effectiveness and efficiency of blocking across all data sets by jump-starting blocking with any of the standard  techniques and then using new robust feedback-based methods for solving blocking sub-tasks in a data-driven way. To the best of our knowledge, \afro{} is the first framework where blocking and pair matching components of ER can help each other and produce high quality results in synergy.


\bibliographystyle{abbrv}
\bibliography{main}


\appendix
\begin{figure*}
\centering
\includegraphics[width= \textwidth]{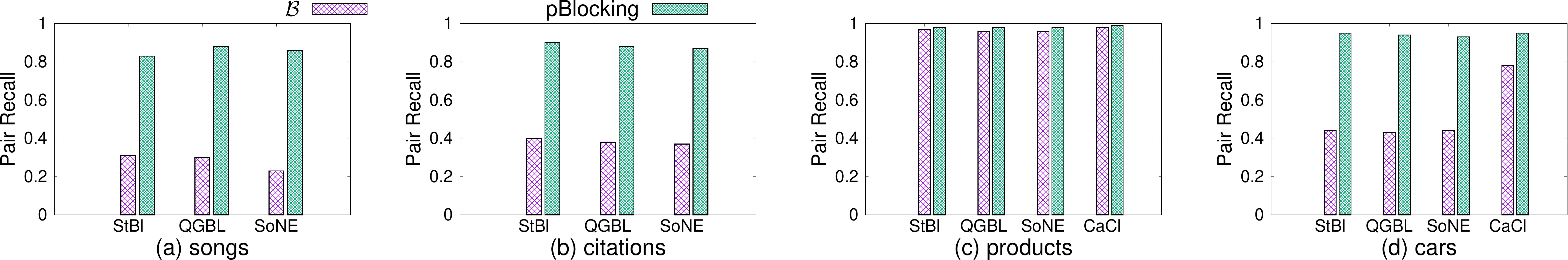}\\
\includegraphics[width= \textwidth]{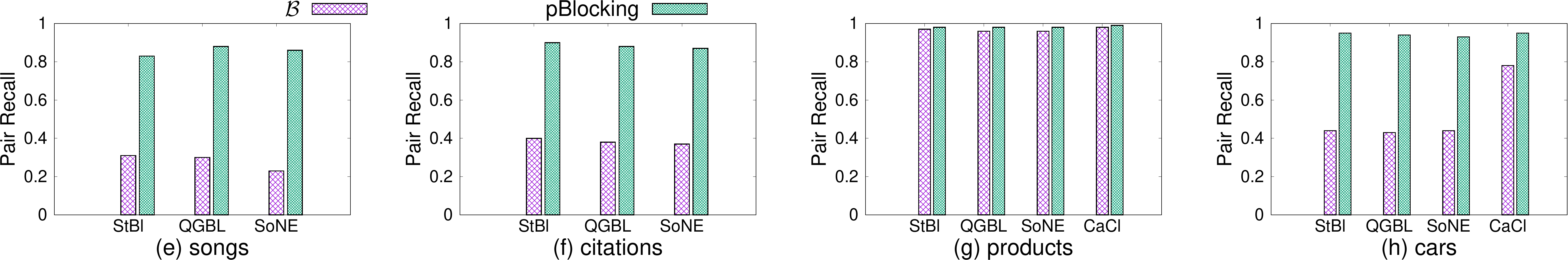}\\
\caption{Pair recall of $\mathcal{B}$(,\texttt{Unif},) and \afro{}(,\texttt{Unif},) with varying $\mathcal{BB}$ and $\mathcal{CC}$. (a-d) use \texttt{MB} and (e-h) use \texttt{BLOSS}. \texttt{CaCl} did not finish within 24 hrs on \texttt{songs} and \texttt{citations} data set. \label{fig:bbafroappendix}}
\end{figure*}

\section{Additional Experiments}

\myparagraph{Blocking Effectiveness} Figure~\ref{fig:bbafro} compares the Pair Recall of \afro{} and a traditional blocking pipeline $\mathcal{B}$, both with block-weights initialized with \texttt{TF-IDF} weighting mechanism. Figure~\ref{fig:bbafroappendix} performs the same comparison with the pipelines initialized using \texttt{Unif} weights.  Since, all blocks are assigned equal weight, we consider the block cleaning threshold of 100 along with default value of M. \afro{} performs substantially better than $\mathcal{B}$ for different settings of block building techniques across various datasets. With comparison to \texttt{TF-IDF} weighting scheme, \texttt{Unif} performs slightly worse but the difference is not substantial. The no-feedback pipeline $\mathcal{B}$ has varied performance across different data sets with the best performance on \texttt{products} and poorest performance on \texttt{citations} and \texttt{songs}.



\end{document}